\newcommand{\bs}[1]{\boldsymbol{#1}}
\newtheorem{prop}{Proposition}
\def\l@subsubsection#1#2{}
\begin{document}

\title{Nonlinear breathers with crystalline symmetries}

\author{Frank Schindler}
\affiliation{Blackett Laboratory, Imperial College London, London SW7 2AZ, United Kingdom}

\author{Vir B. Bulchandani}
\affiliation{Department of Physics, Princeton University, Princeton, New Jersey 08544, USA}

\author{Wladimir A. Benalcazar}
\affiliation{Department of Physics, Emory University, Atlanta, Georgia 30322, USA}

\begin{abstract}
Nonlinear lattice models can support ``discrete breather'' excitations that stay localized in space for all time. By contrast, the localized Wannier states of linear lattice models are dynamically unstable. Nevertheless, symmetric and exponentially localized Wannier states are a central tool in the classification of band structures with crystalline symmetries. Moreover, the quantized transport observed in nonlinear Thouless pumps relies on the fact that -- at least in a specific model -- discrete breathers recover Wannier states in the limit of vanishing nonlinearity. Motivated by these observations, we investigate the correspondence between nonlinear breathers and exponentially localised Wannier states for a family of discrete nonlinear Schr{\"o}dinger equations with crystalline symmetries. We develop a formalism to analytically predict the breathers' spectrum, center of mass and symmetry data, and apply this to nonlinear generalizations of the Su-Schrieffer-Heeger chain and the breathing kagome lattice.
\end{abstract}

\maketitle


\section{Introduction}
Translation-invariant classical lattice models generically exhibit two distinct limits in which they are analytically tractable. In one such limit, the Hamiltonian is a sum of nonlinear on-site terms, and the dynamics consists solely of ``discrete breather'' excitations that are localized to individual lattice sites~\cite{Aubry,Flach1998}. In the other limit, the Hamiltonian is a sum of harmonic couplings between neighbouring lattice sites, and localized wavepackets evolve into delocalized Bloch waves that propagate ballistically in time~\cite{ashcroft2022solid}. One can construct spatially localized Wannier states from Bloch waves by analogy with the Wannier states that arise in the band theory of electrons~\cite{Marzari12}, but such states will be dynamically unstable as they are not eigenmodes. Thus the dynamics between these two solvable limits is qualitatively very different.

For Hamiltonians interpolating between these two extremes, it has been known for some time that discrete breathers and Bloch waves will generally coexist in the low-energy dynamics, provided the breather frequencies are non-resonant with the Bloch band~\cite{Flach1998}. More recently, it has been proposed that there is a direct relation between these two apparently disparate phenonomena in the context of a cubic on-site nonlinearity that, e.g., arises in periodic arrays of photonic waveguides with Kerr nonlinearity or semiclassical regimes of the quantum Bose-Hubbard model. In particular, several experimental and theoretical works on quantized transport in nonlinear Thouless pumps found that discrete breather states continuously recover Wannier states in the limit of vanishing on-site nonlinearity~\cite{jurgensen2021,jurgensen2021chern,mostaan2022quantized,jurgensen2022quantized}. However, to argue for a one-to-one correspondence between breathers and Wannier states, several assumptions must be made: 

\begin{enumerate}[(1)]
\item The Wannier states in question should not only decay exponentially at spatial infinity -- this condition can be satisfied whenever the band structure is not topological~\cite{Brouder07,Soluyanov11,Slager13,Kruthoff17,Po17,Bradlyn17} -- but their exponential decay should set in immediately away from the Wannier center, \emph{i.e.}, they should be supported mostly at a single site.

\item The maximal nonlinear energy contribution to the Hamiltonian should be much smaller than the linear band gap, so that the nonlinear dynamics can be projected into the Wannier basis of a given isolated band to a good approximation~\cite{kevrekidis2009discrete}.

\item This isolated band should be non-degenerate at all momenta.
\end{enumerate}
As long as assumptions (1)--(3) are valid, the effective projected dynamics is accurately captured by the discrete nonlinear Schr{\"o}dinger equation (DNLSE), for which breather solutions have rigorously been proved to exist and be linearly stable~\cite{MacKay1994}. The existence of single-site breathers in the projected dynamics then directly implies the existence of Wannier-like breathers in the original lattice model. 

Here, we investigate the relationship between discrete breathers and Wannier states in situations where assumptions (1)--(3) do not hold and with a particular focus on crystalline symmetries. We consider nonlinear generalizations of tight-binding systems with \emph{molecular bands}, also known as obstructed atomic bands~\cite{Bradlyn17}, whose exponentially localized Wannier states straddle spatially separated sites if they are to preserve the respective crystalline symmetries. By construction, molecular bands maximally violate assumption (1); a textbook example is the Su-Schrieffer-Heeger (SSH) chain~\cite{SSHReview88} with spatial inversion symmetry. In the linear limit, these models are topological in the sense that they cannot be deformed to an (``unobstructed") \emph{atomic} limit, in which the Wannier states are fully localized to individual sites, without breaking the crystalline symmetry or closing a band gap~\cite{Bradlyn17}. 

Our paper develops a general framework that predicts the breather spectrum, linear stability, center of mass, and crystalline symmetry data in an exactly solvable ``molecular" limit, and then applies this to two canonical tight-binding models supplemented by a cubic nonlinearity: the SSH chain in one dimension and the breathing kagome lattice~\cite{Repellin17Kagome} in two dimensions. (Note that in the context of the kagome lattice, the qualifier ``breathing" does not refer to discrete breathers, but rather to an inversion symmetry-breaking spatial modulation of the crystalline lattice.) We then analytically extend our predictions to a small parameter window around each exactly solvable limit and support our claims through numerical simulations far away from these limits. Notably, even though the molecular limits we consider are exactly solvable, they violate both assumptions (1) and (2) in the case of the SSH model, and \emph{all} simplifying assumptions (1)--(3) in the case of the breathing kagome lattice.

While we find that discrete breathers always recover a subset of suitably chosen Wannier states in the limit of vanishing nonlinearity strength, they are subject to qualitatively different symmetry constraints, e.g., due to time-reversal symmetry (TRS), crystalline symmetries, and the gauge freedom in defining multi-band Wannier states. This obstructs a one-to-one correspondence in the multi-band case when assumption (3) is violated. Moreover, when the nonlinearity strength is increased to be of the order of the band gap, such that assumption (2) is violated, we show that a subset of Wannier-like breathers becomes linearly unstable while new and strongly localized breather solutions emerge that are not continously connected to any Wannier states.

We note that some related works exploring connections between band topology and discrete breathers have recently appeared in the literature~\cite{ablowitz2021peierls,jurgensen2021,jurgensen2021chern,mostaan2022quantized,jurgensen2022quantized}. Several of these works refer to ``solitons''; we prefer to reserve the latter term for stable, propagating excitations that scatter elastically off one another~\cite{faddeev2007hamiltonian}, and will instead follow Flach's terminology of ``discrete breathers''~\cite{Flach1998}. Other related work has explored the implications of point-group symmetries for discrete breathers~\cite{crystalref1,crystalref2}, although not from the perspective of Bloch bands and Wannier states adopted below.

\section{Discrete breathers and their linear stability}
In this section, we introduce the main concepts required for analyzing breathers in tight-binding lattices.

\subsection{Crystalline nonlinear Schrödinger equation}
As a generalization of the DNLSE, we study discrete tight-binding models with cubic nonlinearity. These are described by an evolution equation for complex fields $\Phi_a(t) \in \mathbb{C}$,
\begin{equation} \label{eq:nonlinear_tightbinding}
i\frac{\mathrm{d}}{\mathrm{d}t} \Phi_a = \sum_b h_{ab} \Phi_b + g |\Phi_a|^2 \Phi_a,
\end{equation}
where the composite index $a=(\bs{R},\alpha)$ runs over unit cells at lattice position $\bs{R}$ and their internal degrees of freedom $\alpha$, $h_{ab}$ is a translationally invariant and spatially local hopping matrix with an associated band structure, and $g \in \mathbb{R}$ is the nonlinearity strength. We will assume that $g \neq 0$. Physically speaking, the field $\Phi$ can be interpreted as the electric field in periodic arrays of photonic waveguides or as the mean-field wavefunction of a Bose-Einstein condensate. Several recent experiments have realized effective classical dynamics described by Eq.~\eqref{eq:nonlinear_tightbinding} in arrays of photonic waveguides with Kerr nonlinearity, and thereby experimentally demonstrated the existence of breathers and their topological transport across closed adiabatic cycles~\cite{jurgensen2021, jurgensen2022quantized} for suitable choices of hopping matrices $h_{ab}$.


Eq. \eqref{eq:nonlinear_tightbinding} can be written in Hamiltonian form as~\cite{faddeev2007hamiltonian}
\begin{equation}
i\frac{\mathrm{d}}{\mathrm{d}t} \Phi_a = \frac{\partial H}{\partial \Phi_a^*},
\end{equation}
where the Hamiltonian is given by
\begin{equation}
\label{eq:nonlinear_tightbinding_H}
H = \sum_{a,b} \Phi_a^* h_{ab}\Phi_b + \frac{1}{2} g \sum_{a} |\Phi_a|^4.
\end{equation}
Note that $H$ is only real if $h$ is Hermitian, $h = h^\dagger$, in which case the model has a global $U(1)$ symmetry $\Phi_a \rightarrow e^{i \theta} \Phi_a$ \mbox{($\theta \in \mathbb{R}$)} associated with number conservation 
\begin{equation}
\frac{\mathrm{d}}{\mathrm{d}t} \mathcal{N} = 0, \quad \mathcal{N} = \sum_a |\Phi_a|^2.
\end{equation}

Taking the complex conjugate of Eq.~\eqref{eq:nonlinear_tightbinding}, we obtain
\begin{equation}
-i\frac{\mathrm{d}}{\mathrm{d}t} \Phi^*_a = \sum_b h^*_{ab} \Phi^*_b + g |\Phi_a|^2 \Phi^*_a.
\end{equation}
Hence $\Phi^*$ satisfies the time-reversed equation 
\begin{equation}
-i\frac{\mathrm{d}}{\mathrm{d}t} \Phi^*_a = \sum_b h_{ab} \Phi^*_b + g |\Phi_a|^2 \Phi^*_a,
\end{equation}
which is obtained from Eq.~\eqref{eq:nonlinear_tightbinding} by the time-reversal transformation $t \rightarrow -t, \Phi_a \rightarrow \Phi^*_a$, if and only if $h^* = h$ is real.

In this paper, we will assume that $h = h^* = h^T$, \emph{i.e.}, that the dynamics Eq. \eqref{eq:nonlinear_tightbinding} has TRS, is number conserving, and is generated by a real Hamiltonian. We will additionally fix the choice of normalization $\mathcal{N} = 1$, so that breather solutions correspond physically to single-particle excitations, and have vanishing energy density in the thermodynamic limit. Our results can be straightforwardly extended to any $0 < \mathcal{N} < \infty$ by rescaling the effective interaction strength as $g \mapsto g/\mathcal{N}$, which does not qualitatively affect our conclusions.

\subsection{Rotating frame and discrete breathers}
To derive the discrete breather solutions of primary interest in this work, it is useful to change variables to the rotating frame
\begin{equation}
\Phi_a(t) = e^{-i\omega t} \tilde{\Phi}_a(t), 
\end{equation}
for some real frequency $\omega \in \mathbb{R}$ to be determined. In the rotating frame, Eq. \eqref{eq:nonlinear_tightbinding} takes the form
\begin{equation}
\label{eq:rotated_frame}
i \frac{\mathrm{d}}{\mathrm{d}t} \tilde{\Phi}_a = \sum_b \tilde{h}_{ab} \tilde{\Phi}_b + g |\tilde{\Phi}_a|^2 \tilde{\Phi}_a,
\end{equation}
with the ``rotated hopping matrix'' $\tilde{h}$ given by
\begin{equation}
\tilde{h}_{ab} = h_{ab} - \omega \delta_{ab}.
\end{equation}
It will be important below that the rotated equation of motion is itself Hamiltonian, with respect to the ``rotated Hamiltonian''
\begin{equation}
\tilde{H} = \sum_{a,b} \tilde{\Phi}_a^* \tilde{h}_{ab}\tilde{\Phi}_b + \frac{1}{2} g \sum_{a} |\tilde{\Phi}_a|^4.
\end{equation}
Then a concise (but not the most general~\cite{Flach1998}) definition of discrete breather solutions is that they are stationary in the rotating frame,
\begin{equation}
\frac{\mathrm{d}}{\mathrm{d}t} \tilde{\Phi}_a = 0.
\end{equation}
This implies that the discrete breather solution $\tilde{\Phi}_a \equiv \phi_a$ is constant in time when viewed from the rotating frame and satisfies 
\begin{equation} \label{eq:time_indep_DNLSE}
\sum_b (\tilde{h}_{ab} + g |\phi_a|^2 \delta_{ab}) \phi_b = 0.
\end{equation}
Since the latter defines an eigenvalue equation for a real symmetric matrix by TRS, we may set $\phi_{a} \in \mathbb{R}$ without loss of generality. 

We note that breather solutions break translational symmetry as they are localized in space: we will only consider solutions whose envelope $\phi_a$ decays exponentially far away from a fixed center of mass, and whose frequency $\omega$ does not coincide with any eigenfrequency of the linear hopping matrix $h_{ab}$. Correspondingly, and unlike Bloch waves, breather solutions are not parametrized by crystal momentum. Moreover, since Eq.~(1) is nonlinear, it is not possible to superimpose breather solutions centered around different positions to create wavelike solutions with a well-defined momentum.

The remainder of this paper will be concerned with understanding the solutions of Eq. \eqref{eq:time_indep_DNLSE} for various nontrivial hopping matrices $h$. A detailed and self-contained discussion of the linear stability theory of such solutions is given in Appendix \ref{app:stability}. 

\section{Molecular breather limit} \label{sec: molecularlimit}
The band structures given by $h$ in Eq.~\eqref{eq:nonlinear_tightbinding} that we consider in this paper admit a \emph{molecular limit} in which, upon tuning suitable parameters, takes the form
\begin{equation}
h_{ab} = h_{(\bs{R},\alpha),(\bs{R}',\beta)} = \delta_{\bs{R},\bs{R}'} h_{\alpha \beta},
\end{equation}
where $\alpha,\beta$ only range over the degrees of freedom \emph{within each unit cell}, see text below Eq.~\eqref{eq:nonlinear_tightbinding}. This can be seen as a generalization of MacKay and Aubry's ``anti-continuous limit''~\cite{MacKay1994,Aubry} for the standard discrete nonlinear Schr{\"o}dinger equation, some of whose properties we review in Appendix \ref{appendix: dnlse}. We note that unlike for the anti-continuous limit, dynamics in the molecular limit may no longer be integrable; this makes the molecular limit more complicated to analyse in general than the anti-continuous limit.

In the molecular limit, there is no hopping between unit cells. The molecular limit is thus also a dispersionless ``flat band'' limit. However, not all flat band limits necessarily correspond to a molecular limit~\cite{schindler21noncompact}. Similarly, not all atomic bands have a symmetry-preserving molecular limit for a fixed unit cell dimension. Nevertheless, they can always be adiabatically deformed to a molecular limit as long as further orbitals are added to the unit cell~\cite{Else19,schindler21noncompact}. For the sake of simplicity, we will restrict our attention in this paper to the subclass of atomic bands with a molecular limit.

\subsection{Exact solutions}
In the molecular limit, the discrete breather condition Eq.~\eqref{eq:time_indep_DNLSE} reads
\begin{equation} \label{eq:time_indep_DNLSE_molecular_limit}
\omega \phi_{(\bs{R},\alpha)} = \sum_\beta h_{\alpha \beta} \phi_{(\bs{R},\beta)} + g |\phi_{(\bs{R},\alpha)}|^2 \phi_{(\bs{R},\alpha)},
\end{equation}
where $\alpha = 1,2,\ldots,M$ indexes the degree of freedom within each unit cell. The most localized solution that we can construct at the origin, up to translations, is then
\begin{equation} \label{eq: mostlocalized_soln}
    \phi_{(\bs{R},\alpha)} = \delta_{\bs{R},\bs{0}} \phi_\alpha,
\end{equation}
where $\phi_\alpha$ is normalized to $\sum_\alpha |\phi_\alpha|^2 = 1$ and solves the ``reduced'' equations
\begin{equation} \label{eq:time_indep_DNLSE_molecular_limit_reduced}
\omega \phi_{\alpha} = \sum_\beta h_{\alpha \beta} \phi_{\beta} + g |\phi_{\alpha}|^2 \phi_{\alpha}.
\end{equation}

We will study this equation analytically for several molecular-limit model Hamiltonians of interest. Moving away from the molecular limit, these solutions provide good starting points for numerical algorithms for solving Eq. \eqref{eq:time_indep_DNLSE} (see App.~\ref{appendix: numerics}). 

\subsection{Local existence and stability} \label{subsec: molecular_stability}
As discussed in Refs.~\onlinecite{MacKay1994} and \onlinecite{johansson1997existence}, proving existence of discrete breathers in some neighbourhood of the molecular limit requires checking that the dynamics in the molecular limit is anharmonic, i.e. that it cannot be reduced to a system of harmonic oscillators by any canonical transformation of the phase-space coordinates $(p_a,q_a)$. Since this is easiest to show using action-angle variables, which requires both integrability of the molecular limit and knowledge of these variables (neither of which 
is expected for general $h_{\alpha \beta}$), we will simply assume anharmonicity in what follows and rely upon numerical evidence for the existence of breathers instead.

Assuming such local existence holds, a stability analysis can be performed following Appendix \ref{sec: kreinsignature}. The main result of that section is that the linear-response dynamics of small perturbations is determined by a $2N$-by-$2N$ ``stability matrix'' $J \mathcal{L}$ (here $N$ denotes the dimension of $\tilde{h}$ and we refer to Appendix \ref{sec: kreinsignature} for precise definitions of $J$ and $\mathcal{L}$). Purely imaginary eigenvalues of $J \mathcal{L}$ correspond to stable linear perturbations, while eigenvalues of $J \mathcal{L}$ with a non-zero real part correspond to unstable linear perturbations. (Zero eigenvalues are conditionally stable~\cite{kapitula2013spectral} and will be discussed on a case-by-case basis.)

First consider linear stability in the molecular breather limit. For solutions of the form of Eq.~\eqref{eq: mostlocalized_soln}, the stability matrix $J\mathcal{L}$ Eq.~\eqref{eq: stabilityham} decomposes into two types of matrix blocks: a single block of the form
\begin{equation}
\label{eq:stabhambreath}
    (B_{0})_{\alpha \beta} = \begin{pmatrix} 0 & 
    -\tilde{h}_{\alpha \beta} - 3g \phi_{\alpha}^2 \delta_{\alpha \beta}
    \\ 
    \tilde{h}_{\alpha \beta} + g \phi_{\alpha}^2 \delta_{\alpha \beta}
    & 0 \end{pmatrix},
\end{equation}
at lattice position $\bs{R}$, 
and an extensive number $L-1$ of antisymmetric blocks of the form
\begin{equation}
\label{eq:stabhamnonbreath}
    (B_1)_{\alpha \beta} = \begin{pmatrix} 0 & 
    -\tilde{h}_{\alpha \beta} 
    \\ 
    \tilde{h}_{\alpha \beta}
    & 0 \end{pmatrix},
\end{equation}
at lattice positions $\bs{R}' \neq \bs{R}$, where $L$ denotes the number of unit cells in the lattice. As $B_1$ is always anti-Hermitian, only the block $B_0$ at the breather location $\bs{R}$ can give rise to linear instability in the molecular limit. The form of $B_0$ depends on the specific form of the breather under consideration.

The other block $B_1$ does not depend at all on the shape of the breather, but only on its frequency and the linear hopping matrix $h_{\alpha \beta}$. In particular, the spectrum of $B_1$ always lies on the imaginary axis: given eigenvectors $\psi^{(j)}$ of $h_{\alpha \beta}$ with $h \psi^{(j)} = E^{(j)} \psi^{(j)}$, we can diagonalize
\begin{equation} \label{eq: ml_off_site_stability_eigenvectors}
    B_1 \frac{1}{\sqrt{2}}\begin{pmatrix} \psi^{(j)} \\ \pm i \psi^{(j)} \end{pmatrix} = \mp i (E^{(j)} - \omega) \frac{1}{\sqrt{2}} \begin{pmatrix} \psi^{(j)} \\ \pm i\psi^{(j)} \end{pmatrix},
\end{equation}
so that $B_1$ always has eigenvalues $\lambda^{(j)} = \pm i(E^{(j)} - \omega)$. Each non-degenerate eigenvalue of $B_1$ results in an $(L-1)$-fold degenerate eigenvalue of $J\mathcal{L}$. Away from the molecular limit, these eigenvalues will perturbatively recover two (shifted and rescaled) copies of the band structure of the hopping matrix $h_{ab}$. We will further assume the non-degeneracy condition $E^{(j)} \neq \omega$, which generically only fails for finitely many $g \neq 0$ but will be checked below.

Given the latter condition, we can apply the analysis of Sec. \ref{sec: kreinsignature} to explore stability away from the molecular limit. The Krein matrices for eigenvalues of $B_0$ again need to be defined on a case-by-case basis. Assuming no degeneracies within the spectrum of $B_1$ and none between the spectra of $B_0$ and $B_1$ (note that this is the generic case), the Krein matrix corresponding to the highly degenerate stability eigenvalue $\lambda = i(E^{(j)}-\omega)$ is given by the $(L-1) \times (L-1)$ matrix
\begin{equation}
\label{eq:B1KreinMatrix}
    K_{\bs{R}',\bs{R}''}= \delta_{\bs{R}'\bs{R}''}(E^{(j)}-\omega), \quad \bs{R}',\bs{R}'' \neq \bs{R}.
\end{equation}
By Proposition \ref{prop1} and our assumption that $E^{(j)} \neq \omega$, it follows that the eigenvalue $\lambda$ cannot bifurcate in some neighbourhood of the molecular limit. The eigenvalue $\lambda = -i(E^{(j)}-\omega)$ has the same Krein matrix Eq. \eqref{eq:B1KreinMatrix}. As a corollary of this analysis, it follows that collisions between pairs of eigenvalues of $B_1$ with $E^{(j)}-\omega = E^{(k)}-\omega \neq 0$ will have a definite Krein matrix, and therefore cannot give rise to bifurcations. However, collisions between pairs of eigenvalues with $E^{(j)} - \omega = \omega - E^{(k)} \neq 0$, such that 
\begin{equation}
\label{eq: mysterycondition}
E^{(j)} + E^{(k)} = 2\omega
\end{equation}
will have an indefinite Krein matrix. A physical intuition for the condition Eq. \eqref{eq: mysterycondition} is that it implies a nonlinear resonance between the molecular breather state and the flat band in question, i.e. a small denominator in classical perturbation theory~\cite{wayne1994introduction}, which may generate a nonlinear instability when the system is perturbed. A complete stability analysis requires more detailed knowledge of $B_0$ and $B_1$.

\subsection{Symmetry data} \label{sec: symmetry_data_introduction}
Discrete breathers in the molecular limit are further constrained by unitary symmetries of the dynamics. To be specific, provided both the linear and the nonlinear terms of Eq.~\eqref{eq:time_indep_DNLSE} preserve a given crystalline symmetry $U$ (see App.~\ref{appendix: symmetries} for details), the action of $U$ on a discrete breather solution, $\phi \mapsto U \phi$, will always yield another discrete breather solution with the same frequency as $\phi$. 

For instance, the translational symmetry of Eq.~\eqref{eq:time_indep_DNLSE} always gives rise to an extensively large orbit of frequency-degenerate breather solutions that are obtained by replacing $\delta_{\bs{R},\bs{0}} \rightarrow \delta_{\bs{R},\bs{R}'}$ in Eq.~\eqref{eq: mostlocalized_soln}, for $\bs{R}'$ any unit cell in the lattice. 

In this work, we will specifically be interested in the point group symmetries of the underlying lattice, such as spatial inversion symmetry $\mathcal{I}$ or three-fold rotational symmetry $\mathcal{C}_3$. Each point group symmetry operation $U$ generates a cyclic group $C_U = \langle U \rangle$. The action of $C_U$ on a discrete breather solution $\phi$ defines the orbit $C_U[\phi] = \{U^m\phi:m=0,1,\ldots\}$. We note that the size of $C_U[\phi]$ is at most the order of $C_U$, assumed finite. A special case arises when $\phi$ is an eigenvector of $U$, $U\phi = \lambda_U \phi$; then $\lambda_U$ must be a root of unity of degree $|C_U|$.

We call the collection of eigenvalues $\lambda_U$ and/or orbit dimensions $|C_U[\phi]|$ for all point group generators $U$ the \emph{symmetry data} of a given nonlinear breather solution $\phi$. The symmetry data is a discrete object -- it cannot change continuously -- and so it must remain invariant when tuning away from the molecular breather limit until a frequency gap with other breather modes is closed.

\subsection{Wannier states}
In the molecular limit, one possible choice of Wannier states spanning the spectrum of the tight-binding matrix $h_{ab}$ in Eq.~\eqref{eq:nonlinear_tightbinding} is given by
\begin{equation} \label{eq: wannier_molecular_general_def}
    W^{(n)}_{(\bs{R},\alpha)} = \delta_{\bs{R},\bs{0}} W^{(n)}_\alpha,
\end{equation}
where $n$ labels the different Wannier states that we have chosen to be localized to the unit cell at $\bs{R}=\bs{0}$ (the remaining Wannier states are then obtained by lattice translations). Here, $W^{(n)}_\alpha$ are the eigenvectors of the matrix $h_{\alpha \beta}$ in Eq.~\eqref{eq:time_indep_DNLSE_molecular_limit_reduced}. Defined in this way, the molecular-limit Wannier states are eigenstates of $h_{ab}$, whose spectrum consists of a discrete set of flat bands. They are also delta-localized and symmetric by definition, as long as the $W^{(n)}_\alpha$ are chosen in a symmetric fashion. Away from the molecular limit, any choice of exponentially-localized and symmetric Wannier states will in general spread beyond the ``home" unit cell (chosen to be $\bs{R} = \bs{0}$ above), and these Wannier states will cease to be exact eigenstates of $h_{ab}$. However, their topological properties protected by crystalline symmetries~\cite{Slager13,Kruthoff17,Po17,Bradlyn17} will remain the same as those of the states in Eq.~\eqref{eq: wannier_molecular_general_def}: for the molecular band structures considered here, these are the Wannier state centers of mass and the set of irreducible representations of the crystalline symmetry group furnished by the Wannier basis (the ``Wannier state symmetry data").

\begin{figure}[t]
\centering
\includegraphics[width=0.48\textwidth,page=1]{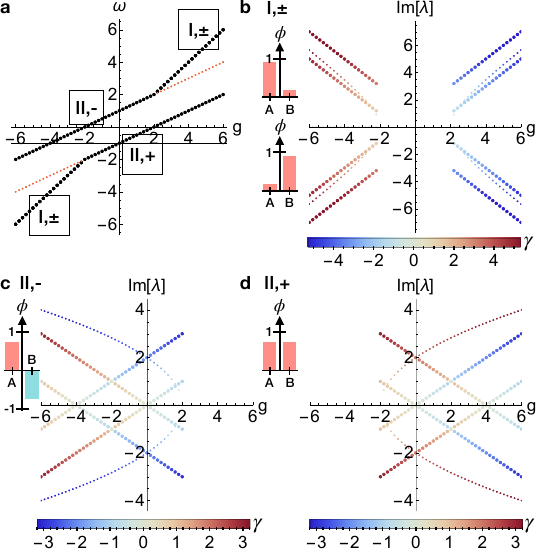}
\caption{Analytical molecular limit breather solutions of the nonlinear SSH chain (Eqs.~\eqref{eq:time_indep_DNLSE},~\eqref{eq: SSH}) for $m = 1, \epsilon = 0$. \textbf{a}~Breather frequencies derived in Sec.~\ref{sec: ssh_molecular_limit}. The frequencies of linearly stable (unstable) breather solutions are shown with thick black (thin red) dots. The linear SSH spectrum at $\omega = \pm m$ is indicated by gray horizontal lines.
\textbf{b}, \textbf{c}, \textbf{d}~Linear stability spectrum $\mathrm{Im}[\lambda]$ and Krein eigenvalues $\gamma$ of the stable breather solutions (the curves terminate when the respective solution becomes unstable). Here, non-degenerate (macroscopically degenerate) stability eigenstates are indicated by thin (thick) dots (see Sec.~\ref{subsec: molecular_stability} for details).
At nonlinearity $|g|>2$, there exist two degenerate stable breathers (solutions $(\mathrm{I},\pm)$) that are mostly localized on either the A ($+$) or B ($-$) site and exchanged by inversion symmetry. In the limit $g \rightarrow \infty$, these modes approach delta-localized breathers (see also App.~\ref{appendix: dnlse}).
At $g < 2$ ($g > -2$) respectively, a further breather mode exists (solutions $(\mathrm{II},\pm)$) that continuously connects to the SSH band at $\omega = m$ ($\omega = -m$). These ``band" breathers share their inversion eigenvalue and center of mass with the Wannier states of the respective linear band.
}
\label{fig: ssh_theory}
\end{figure}

Since we defined the molecular limit Wannier states such that they are eigenstates of $h_{ab}$, it is clear that the breather solutions to Eq.~\eqref{eq:time_indep_DNLSE_molecular_limit} must recover at least a subset of Wannier states in the linear limit $g \rightarrow 0$, assuming that the breathers in question exist for arbitrarily small $|g|$. More specifically,
\begin{equation}
    \lim_{g \rightarrow 0 } \phi^{(\tilde{n})}_{(\bs{R},\alpha)} = W^{(\tilde{n})}_{(\bs{R},\alpha)}
\end{equation}
must hold for $\tilde{n}$ indexing a suitably chosen set of breathers and Wannier states.
We note this equality is only strictly true in the molecular limit. However, because it holds in this limit, the breather solutions must share their topological properties with a subset of Wannier states also away from the molecular limit. In particular, in the case of molecular bands, the breather center of mass must agree with that of the respective Wannier states, and the breather and Wannier state symmetry data must be compatible. This relationship can be used to constrain the breather spectrum based on knowledge of the linear band structure alone. However, it is important to note that this relationship does not in general allow us to predict the breather solutions from the linear limit: First, as we will show in Sec.~\ref{sec: ssh_overarching}, some breather solutions only exist in a window of nonlinearity strengths $g$ that does not include the linear limit $g = 0$. In this case, the linear limit imposes no constraints on the breather solutions themselves, but it can still affect their linear stability. Second, as we show in Sec.~\ref{sec: kagome}, even those breathers that do persist at arbitrarily small $|g|$ in general only recover a proper subset of Wannier states. 

As we will demonstrate in the following, both these inadequacies of the linear limit in predicting the full breather spectrum are remedied by solving the breather problem in the molecular limit and then extrapolating beyond this limit using the Krein stability theory discussed in Sec.~\ref{sec: kreinsignature}.

\section{Nonlinear Su-Schrieffer-Heeger chain} \label{sec: ssh_overarching}
We now turn to a simple example for which the molecular limit is useful, namely a molecular band structure with nonlinearity in 1D. 

\subsection{Tight-binding matrix and symmetries}
The Su-Schrieffer-Heeger (SSH) model~\cite{SSH79} has two sites per unit cell $\alpha = A,B$, and corresponds to the real-space hopping matrix
\begin{equation}
\begin{aligned}
&h_{(R,\alpha),(R',\beta)} = \\ 
&\begin{pmatrix}
0 & -m \delta_{R,R'} - \epsilon \delta_{R+1,R'} \\ 
-m \delta_{R,R'} - \epsilon \delta_{R-1,R'} & 0
\end{pmatrix}_{\alpha \beta},
\end{aligned}
\label{eq: SSH}
\end{equation}
where $m\geq 0$ and $\epsilon \geq 0$ represent intra and inter-unit cell hopping, respectively. The linear system is solved by a Fourier transform to momentum space, which yields the Bloch matrix
\begin{equation} \label{eq: linearblochssh}
h(k)_{\alpha \beta} = -(m + \epsilon \cos k)\sigma^x_{\alpha \beta} - \epsilon \sin k \sigma^y_{\alpha \beta}.
\end{equation}
This has two gapped bands as long as $|m| \neq |\epsilon|$. For $m=0$ ($\epsilon=0$), these become perfectly flat at frequencies $\omega = \pm \epsilon$ ($\omega = \pm m$). We now study breather modes in the nonlinear system defined by letting $h_{(R,\alpha),(R',\beta)}$ be the SSH hopping matrix in Eq.~\eqref{eq:nonlinear_tightbinding}, which we will refer to as the ``nonlinear SSH chain''. This problem is equivalent to fixed-time slices of the nonlinear Thouless pump that has recently enjoyed both experimental and theoretical attention~\cite{jurgensen2021,jurgensen2021chern,jurgensen2022quantized,mostaan2022quantized}.

Beyond lattice translation symmetry, the SSH model exhibits crystalline inversion symmetry $\mathcal{I}_{(R,\alpha),(R',\beta)} = \delta_{R',-R} \sigma^x_{\alpha \beta}$ that satisfies $\mathcal{I}^2 = \mathbb{1}$. We confirm in App.~\ref{appendix: symmetries} that the nonlinear part of the breather equation Eq.~\eqref{eq:time_indep_DNLSE} also preserves this inversion symmetry. Correspondingly, there are three distinct possibilities for the breather symmetry data: the orbit under inversion symmetry $\{\phi,\mathcal{I}\phi\}$ either forms a pair of linearly independent states $\{\phi,\phi'\}$, or takes the form $\{\phi, \pm \phi\}$, corresponding to an inversion eigenvalue $\lambda^{(\mathcal{I})}=\pm 1$. We refer to the latter cases as inversion even and odd respectively.

\begin{figure}[t]
\centering
\includegraphics[width=0.48\textwidth,page=1]{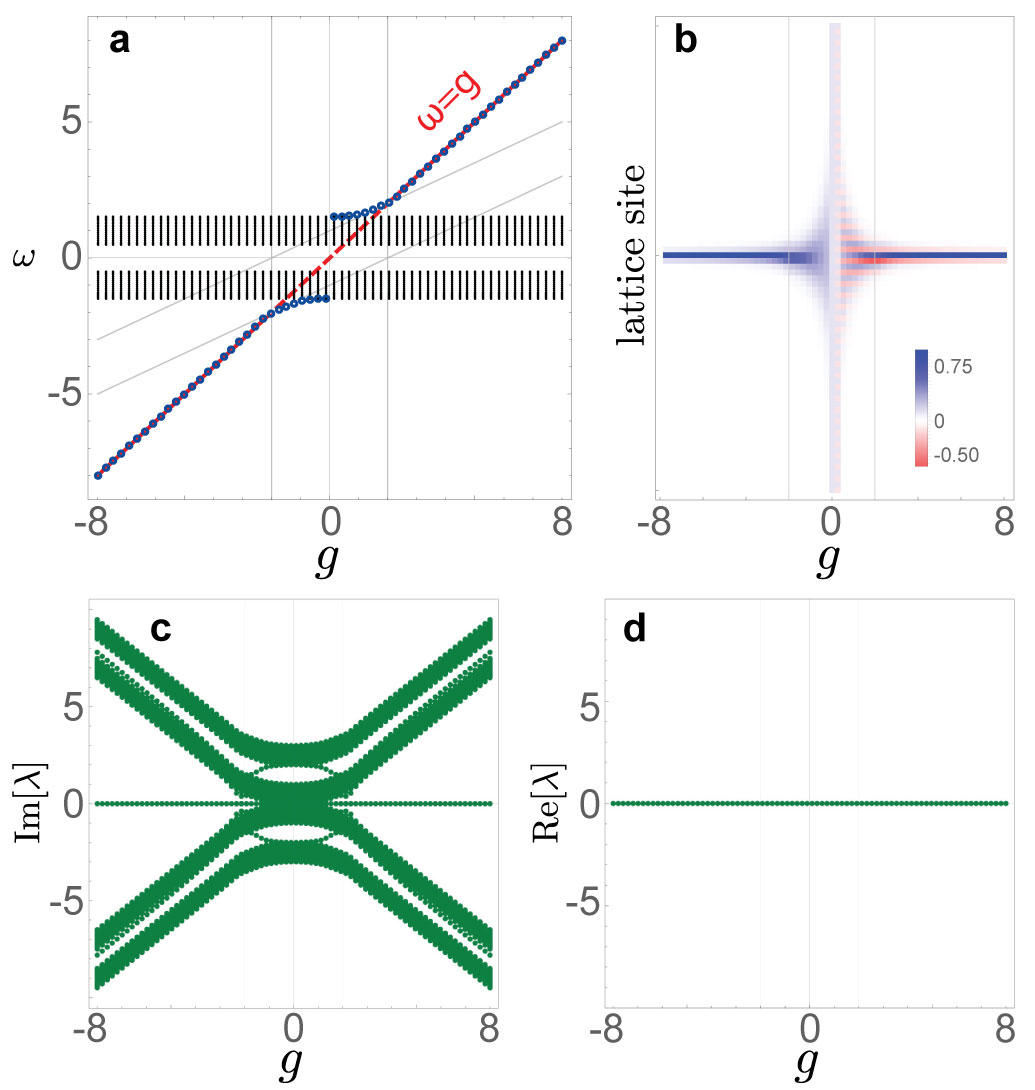}
\caption{Numerical breather solutions of Type ${(\textrm{I}, \pm)}$ ($|g| > 2$) and Type ${(\textrm{II}, \pm)}$ ($|g| \leq 2$) in the nonlinear SSH chain (Eqs.~\eqref{eq:time_indep_DNLSE},~\eqref{eq: SSH}) with $m=1$ and $\epsilon=0.5$. \textbf{a}~Linear spectrum (black) and breather frequencies (blue). \textbf{b}~Spatial profiles $\phi_{(R,\alpha)}$. Blue (red) color represents positive (negative) components. \textbf{c, d} Linear stability eigenvalues $\lambda$.
}
\label{fig:numerics_ssh_type1}
\end{figure}

\subsection{Molecular breather limit} \label{sec: ssh_molecular_limit}
The molecular limit of the SSH model is obtained by setting $\epsilon=0$. We note that there is another limit -- setting $m=0$ -- which is physically equivalent and corresponds to the molecular limit after a change of unit cell. Without loss of generality, we focus here on the limit $\epsilon=0$ and set $m>0$. Our findings are summarized in Fig.~\ref{fig: ssh_theory}.

\subsubsection{Solving for molecular breathers}
In the molecular limit, the discrete breather conditions Eq.~\eqref{eq:time_indep_DNLSE_molecular_limit_reduced} in a single unit cell read
\begin{align}
\omega \phi_A &= g \phi_A^3 - m \phi_B, \\
\omega \phi_B &= -m \phi_A + g \phi_B^3.
\end{align}

To solve these equations, note that adding and subtracting them respectively yields
\begin{align}
    (\phi_A+\phi_B)\left(\phi_A^2-\phi_A\phi_B + \phi_B^2 - \frac{\omega+m}{g}\right)&= 0 ,\\
    (\phi_A-\phi_B)\left(\phi_A^2+\phi_A\phi_B + \phi_B^2 - \frac{\omega-m}{g}\right)&= 0.
\end{align}
If $\phi_A = \pm \phi_B$, we obtain the normalized solutions
\begin{equation}
\begin{pmatrix}\phi_A \\ \phi_B
\end{pmatrix} = \frac{1}{\sqrt{2}} \begin{pmatrix} 1 \\ \pm 1 \end{pmatrix}, \quad \omega = \mp m + \frac{g}{2}.
\end{equation}
If $\phi_A \neq \pm \phi_B$, we obtain the pair of coupled quadratic equations 
\begin{align}
    \phi_A^2-\phi_A\phi_B + \phi_B^2 &= \frac{\omega+m}{g}, \\
    \phi_A^2+\phi_A\phi_B + \phi_B^2 &= \frac{\omega-m}{g}.
\end{align}
Adding these equations and imposing normalization yields
\begin{equation}
\frac{\omega}{g} = \phi_A^2 + \phi_B^2 = 1
\end{equation}
while subtracting them and imposing normalization again, we deduce that 
\begin{equation}
\begin{pmatrix}\phi_A \\ \phi_B
\end{pmatrix} = \begin{pmatrix} x_{\pm} \\ x_{\mp} \end{pmatrix}, \quad \omega = g,
\end{equation}
where is is convenient to define $x_{\pm}= \pm\sqrt{(1 \pm \sqrt{1 - 4m^2/g^2})/2}$ with the key property that $x_{\mp} = -m/(gx_{\pm})$.

We deduce that there are two non-trivial pairs of discrete breather solutions in the molecular limit, which we can summarize as
\begin{equation} \label{eq: ssh_molecular_explicitsoln}
\begin{aligned}
&\phi^{(\mathrm{I},\pm)}_{\alpha} = \begin{pmatrix} x_\pm \\ x_\mp \end{pmatrix}_\alpha, \quad \omega^{(\mathrm{I},\pm)} = g, \\
&\phi^{(\mathrm{II},\pm)}_{\alpha} = \frac{1}{\sqrt{2}} \begin{pmatrix} 1 \\ \pm 1 \end{pmatrix}_\alpha, \quad \omega^{(\mathrm{II},\pm)} = \mp m + \frac{g}{2}.
\end{aligned}
\end{equation}
Since we set $\phi_\alpha$ to be real, these are unique up to a global sign flip.

Solutions $(\mathrm{I},\pm)$ only exist for $|g| \geq 2m$ where the inner square-root in the definition of $x_\pm$ is real. These solutions are single-site localized in the limit of large $g$, and converge to $\phi^{(\mathrm{I},+)}_\alpha \rightarrow (1,0)_\alpha^\mathrm{T}$ and $\phi^{(\mathrm{I},-)}_\alpha \rightarrow (0,1)_\alpha^\mathrm{T}$ as the nonlinearity strength $g\rightarrow \infty$. In the opposite limit $|g| \to 2m^+$ they coincide both with each other and with the appropriate inversion-symmetric solution. Note that for fixed values of $|g| > 2m$ these solutions do not individually preserve inversion symmetry and are not continuously connected to any Wannier state. Instead, they form a frequency-degenerate doublet of partially single-site localized solutions.

Solutions $(\mathrm{II},\pm)$, which exist for all $g \neq 0$, describe an inversion-even (for $+$) or inversion-odd (for $-$) two-site breather mode. As $g \to 0$, these modes merge with the linear band at frequency $\omega = \mp m$, which has inversion-even or inversion-odd maximally-localized Wannier states respectively, with Wannier centers that are equal to the breather centers of mass. Correspondingly, the small-$g$ breathers of the nonlinear SSH chain indeed recover the Wannier states of the band whence they bifurcate with respect to their symmetry data and center of mass, as was also found by Refs.~\onlinecite{ablowitz2021peierls,jurgensen2021,jurgensen2021chern,mostaan2022quantized,jurgensen2022quantized}.

\begin{figure}[t]
\centering
\includegraphics[width=0.48\textwidth,page=1]{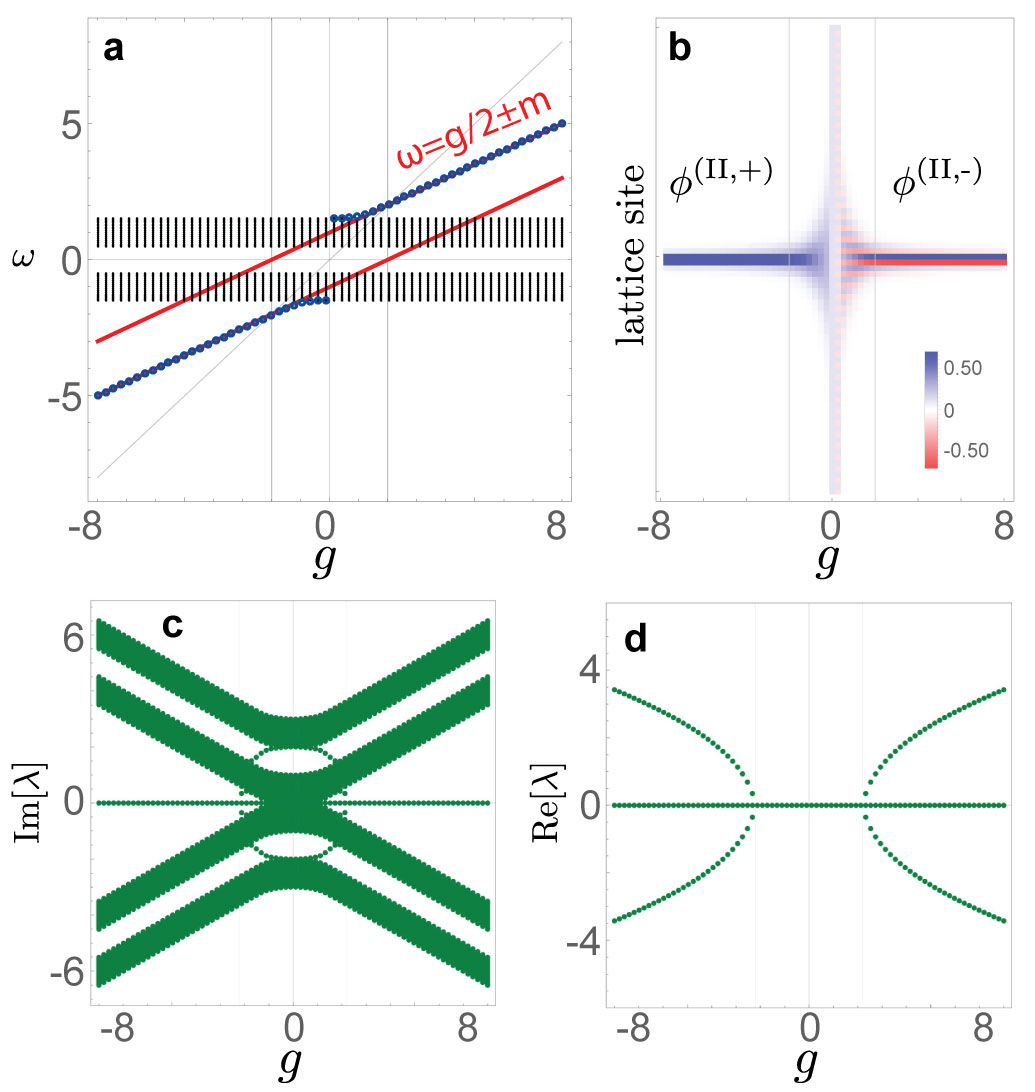}
\caption{Numerical breather solutions of Type $(\textrm{II},\pm)$ in the nonlinear SSH chain (Eqs.~\eqref{eq:time_indep_DNLSE},~\eqref{eq: SSH}) with $m=1$ and $\epsilon=0.5$. \textbf{a}~Linear spectrum (black) and breather frequencies (blue). \textbf{b}~Spatial profiles $\phi_{(R,\alpha)}$. Blue (red) color represents positive (negative) components. \textbf{c, d}~Stability eigenvalues $\lambda$.}
\label{fig:numerics_ssh_type2}
\end{figure}

Each of the above solutions and their stability spectra are depicted in Fig. \ref{fig: ssh_theory}. (See Appendix \ref{app:SSHstab} for exact expressions for these stability spectra.)

Finally, it is interesting to note that the dynamics of the nonlinear SSH chain in the molecular limit defines an integrable dynamical system since, in this limit, there are four real degrees of freedom and two conserved charges (corresponding to energy and particle number) in each unit cell. Constructing action-angle variables for this system might allow for a mathematical proof of the existence~\cite{MacKay1994,Aubry,johansson1997existence} of discrete breathers in some neighbourhood of $\epsilon=0$. We will not pursue this direction here and instead rely on numerical evidence for the local existence of breathers.

\begin{figure*}[t]
\centering
\includegraphics[width=.8\textwidth,page=1]{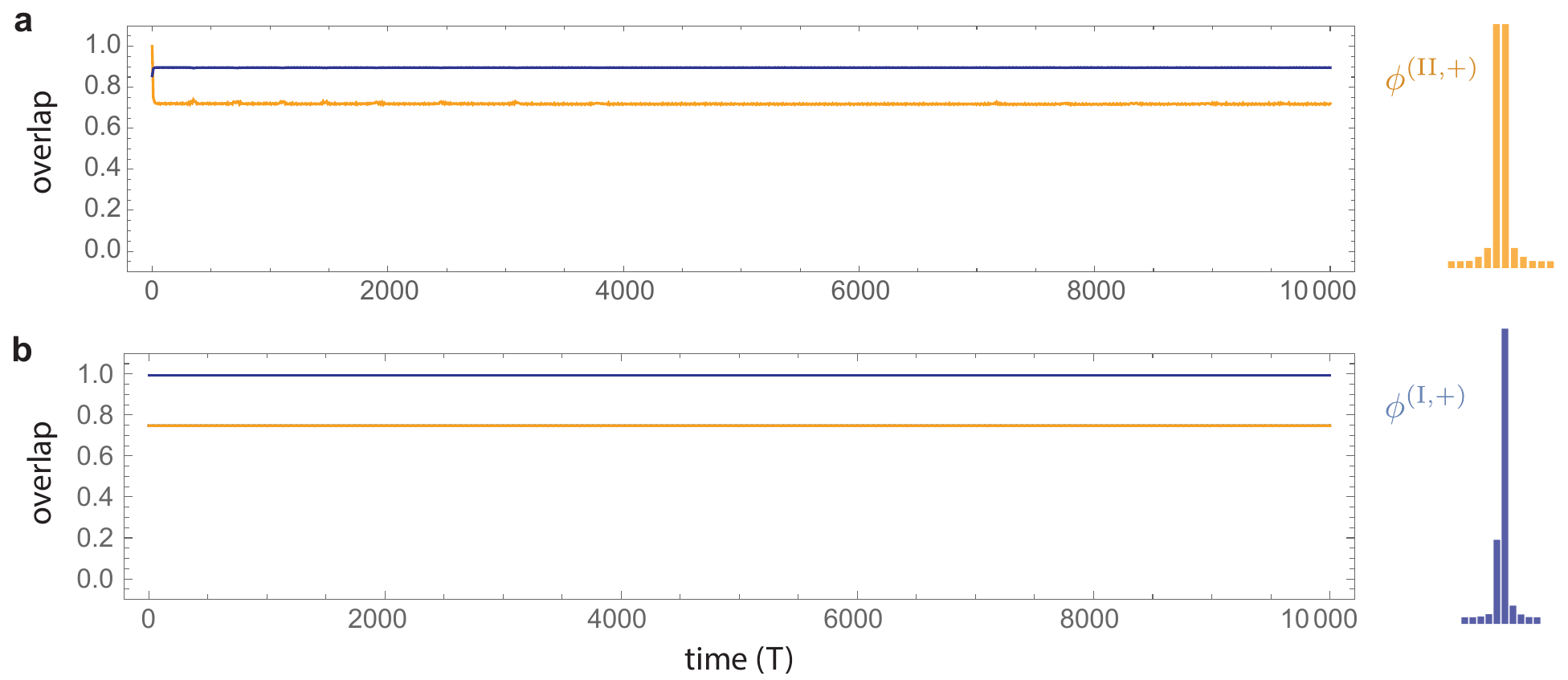}
\caption{Time evolution of linearly stable versus linearly unstable breathers in the nonlinear SSH chain under small perturbations. The parameters of the SSH chain are $m=1$ and $\epsilon=0.15$. The strength of nonlinearity is set to $g=-4$. At this value of $g$, there are two breather solutions: $\phi^{(\mathrm{I},+)}$, stable, and $\phi^{(\mathrm{II},+)}$, unstable. In (a), the initial field is $\phi^{(\mathrm{a})}(t=0)=\phi^{(\mathrm{II},+)}+\phi_\mathrm{pert}$, while in (b), it is $\phi^{(\mathrm{b})}(t=0)=\phi^{(\mathrm{I},+)}+\phi_\mathrm{pert}$, where $\phi_\mathrm{pert}$ is a random field drawn from a uniform distribution between $\pm 0.02$. The fields $\phi^{(\mathrm{a})}(t=0)$ and $\phi^{(\mathrm{b})}(t=0)$ are normalized to have unit magnitude, which is preserved throughout the time evolution. The yellow and blue lines correspond to the overlaps of the instantaneous fields with the unperturbed breathers $\phi^{(\mathrm{II},+)}$ and $\phi^{(\mathrm{I},+)}$, respectively. The simulations are performed using the RK4 algorithm with time step $\Delta t = 0.005$ and run for 10,000 breather periods.
}
\label{fig:numerics_ssh_Dynamics}
\end{figure*}

\begin{figure}[t]
\centering
\includegraphics[width=0.48\textwidth,page=1]{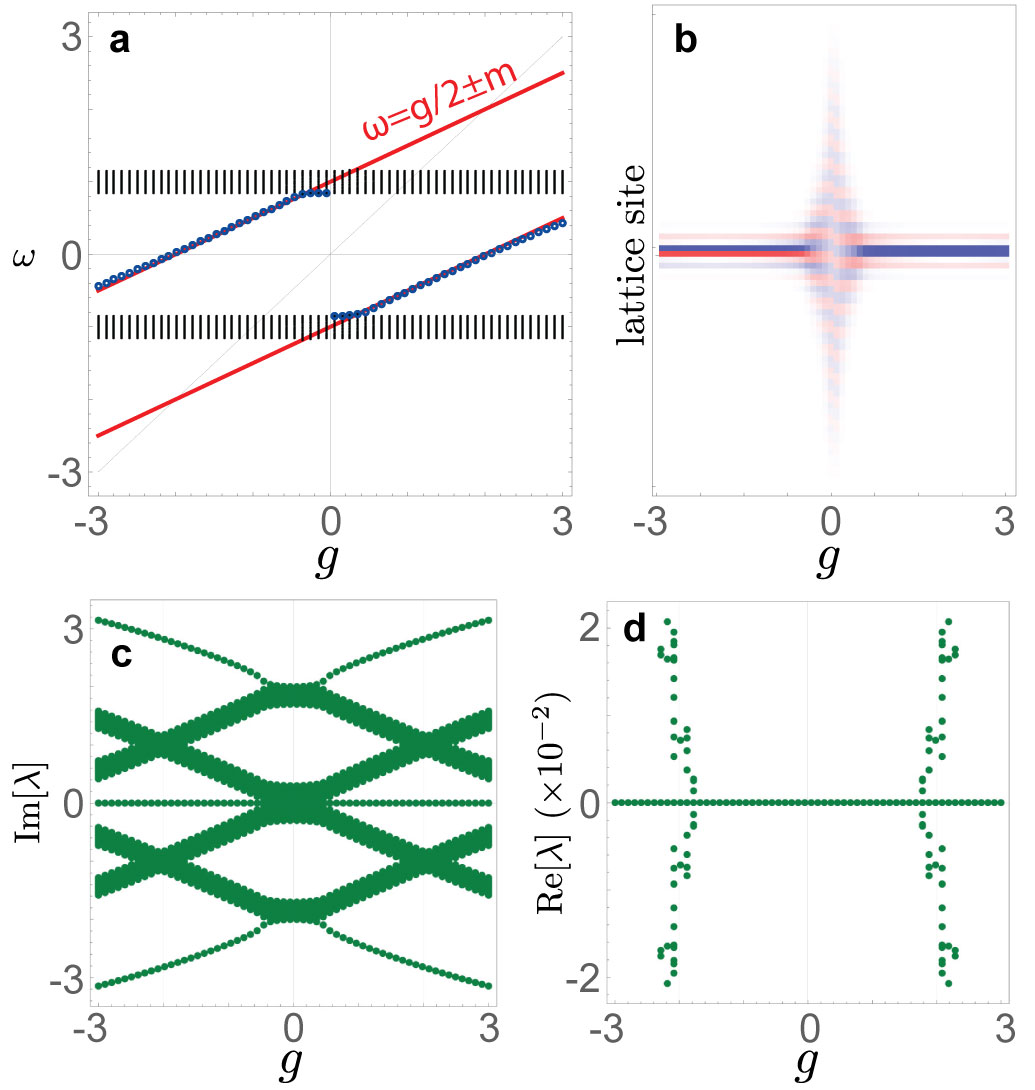}
\caption{Numerical in-gap breather solutions of Type $\mathrm{(II,\pm)}$ in the nonlinear SSH chain (Eqs.~\eqref{eq:time_indep_DNLSE},~\eqref{eq: SSH}) with $m=1$ and $\epsilon=0.15$. \textbf{a}~Linear spectrum (black) and breather frequencies (blue). \textbf{b}~Spatial profiles $\phi_{(R,\alpha)}$. Blue (red) color represents positive (negative) components. \textbf{c, d} Stability eigenvalues $\lambda$. Instabilities develop in the region $|g| \approx 2m$ (absent in the molecular limit), where stability eigenvalues with opposite Krein eigenvalues collide (Fig.~\ref{fig: ssh_theory}c,d).
}
\label{fig:numerics_ssh_InGap}
\end{figure}

\subsection{Perturbing away from the molecular limit} \label{sec: ssh_numerics}
We now solve numerically for discrete breathers in the nonlinear SSH chain away from the molecular limit, where the linear band structure acquires a dispersion [Eq.~\eqref{eq: linearblochssh}] and analytical tractability is no longer expected. The results are shown in Figs.~\ref{fig:numerics_ssh_type1},~\ref{fig:numerics_ssh_type2}  and~\ref{fig:numerics_ssh_InGap}. Let us first describe the  $\phi^\mathrm{(I,R)}$ breathers. These bifurcate from band edges with Bloch states at $k=0$ and with inversion eigenvalues $\lambda^{(\mathcal{I})}=-1$ ($\lambda^{(\mathcal{I})}=+1$) for $g>0$ ($g<0$), see Fig.~\ref{fig:numerics_ssh_type1}a. Consequently, these breathers have components out of phase (in phase) for $g>0$ ($g<0$), see Fig.~\ref{fig:numerics_ssh_type1}b. As expected from the stability eigenvalues of these breathers in the molecular limit (Fig.~\ref{fig: ssh_theory}b), they are always linearly stable (Figs.~\ref{fig:numerics_ssh_type1}c,d).

Breathers $\phi^\mathrm{(II,R)}$ have molecular limit solutions that traverse the linear spectrum. We will analyze these breathers in two regions: when they bifurcate away from the linear bands as $\omega=g/2\pm m$ for $g>0$ ($g<0$) respectively (Fig.~\ref{fig:numerics_ssh_type2}) and when they lie within the linear band gap (Fig.~\ref{fig:numerics_ssh_InGap}). 
In Figs.~\ref{fig:numerics_ssh_type2}b,c, these breathers are qualitatively similar to the breather excitations of the usual DNLSE (Figs.~\ref{fig:numerics_metal}c,d in the appendix); however, unlike for the DNLSE, there is a sharp transition to instability at $g=2$ (Fig.~\ref{fig:numerics_ssh_type2}d), where a pair of stability eigenvalues merge at $\lambda=0$ with equal vanishing Krein eigenvalues $\gamma=0$ (small dots in Fig.~\ref{fig: ssh_theory}c,d).

To probe these breathers' nonlinear stability beyond their linear stability discussed above, we simulate their time evolution under small initial perturbations. This evolution is simulated using the standard fourth-order Runge-Kutta algorithm (RK4). Figure \ref{fig:numerics_ssh_Dynamics} shows the dynamics of breathers $\phi^{(\mathrm{II},+)}+\phi_\mathrm{pert}$ (a) and $\phi^{(\mathrm{I},+)}+\phi_\mathrm{pert}$ (b). In both cases, the breathers are normalized to have unit total charge, and $\phi_\mathrm{pert}$ is a random field drawn from a uniform distribution between $\pm 0.02$. The nonlinearity is set to $g=-4$. At this strength of nonlinearity, $\phi^{(\mathrm{I},+)}$ ($\phi^{(\mathrm{II},+)}$) is a stable (unstable) breather solution. Figure~\ref{fig:numerics_ssh_Dynamics}(a) shows how the unstable breather $\phi^{(\mathrm{II},+)}$ decays into the stable breather $\phi^{(\mathrm{I},+)}$ on the accessible timescale, which can be seen by computing its overlap with the unperturbed breathers. In Fig.~\ref{fig:numerics_ssh_Dynamics}(b), on the other hand, we verify that the shape of the breather $\phi^{(\mathrm{I},+)}$ is preserved under time evolution. The breather periods are $420$ and $320$ time steps for $\phi^{(\mathrm{II},+)}$ and $\phi^{(\mathrm{I},+)}$. Fig.~\ref{fig:numerics_ssh_Dynamics} shows that this behaviour is qualitatively unchanged over $10^4$ breather periods.


The $\phi^\mathrm{(II,R)}$ in-gap breathers bifurcate from Bloch states at $k=\pi$ and inversion eigenvalues $\lambda^{(\mathcal{I})}=-1$ ($\lambda^{(\mathcal{I})}=+1$) for $g<0$ ($g>0$), see Figs.~\ref{fig:numerics_ssh_InGap}a,b. At $g=\pm 2$, the breathers are at mid-gap, \emph{i.e.}, they have $\omega=0$. Exactly at this point in the molecular limit, there is a crossing of real $\lambda$ eigenvalues with opposite Krein eigenvalues (condition Eq.~\eqref{eq: mysterycondition} is fulfilled), allowing in principle for perturbative linear instability (see Figs.~\ref{fig: ssh_theory}c,d). Away from the molecular limit, $\epsilon \neq 0$, the region of linear instability spreads to the vicinity of $g=\pm 2$ (Figs.~\ref{fig:numerics_ssh_InGap}c,d). 

Interestingly, the mid-gap breathers resemble zero energy modes in gapped lattices with chiral symmetry insofar as they have support over only one sublattice ($\alpha = A$ or $\alpha = B$ in Eq.~\eqref{eq: SSH}) on either side of the breather away from its center of mass. At the breather center itself, the breather is supported over two adjacent sites belonging to different sublattices, creating a domain wall where the sublattice supporting the breather mode switches between $A$ and $B$ (Fig.~\ref{fig:numerics_ssh_InGap}b).

\subsection{Comparison of breathers and Wannier states}
We now compare the breather solutions we have predicted analytically and observed numerically with the Wannier states of the linear band structure.

Let us first recall the Wannier states of the SSH chain, with linear Bloch matrix given in Eq.~\eqref{eq: linearblochssh}. In the molecular limit, $\epsilon = 0$, there are two flat bands at frequencies $\omega = \pm m$. In this limit, the Wannier states become eigenstates of the real space hopping matrix Eq.~\eqref{eq: SSH}~\cite{Read17,schindler21noncompact} with eigenfrequencies $\omega = \pm m$, and in the unit cell at $R = 0$ they read 
\begin{equation} \label{eq: ssh_wannierstates}
    W^{\pm m}_{(R,\alpha)} = \delta_{R,0} W^{\pm m}_{\alpha}, \quad W^{\pm m}_{\alpha} = \frac{1}{\sqrt{2}} \begin{pmatrix} 1 \\ \mp 1 \end{pmatrix}_{\alpha}.
\end{equation}
These Wannier states $W^{\pm m}$ have inversion eigenvalues $\lambda_{\mathcal{I}} = \mp 1$ and center of mass $R = 0$. Neither their inversion eigenvalue nor their center of mass can change without breaking inversion symmetry in Eq.~\eqref{eq: linearblochssh} or closing a gap between the two bands: both are topological invariants of the linear band structure. 

We now compare the Wannier states of Eq.~\eqref{eq: ssh_wannierstates} with the breathers of the nonlinear SSH model. As derived in Sec.~\ref{sec: ssh_molecular_limit}, the breather solutions $(\mathrm{I},\pm)$ approach the fully on-site breathers of the DNLSE (see also App.~\ref{appendix: dnlse}) in the limit $g \rightarrow \infty$. Conversely, since these breathers only exist for relatively large nonlinearity strengths $|g| \geq 2m$, there is no sense in which these breathers connect continuously to the band structure as $g \rightarrow 0$ and they are not expected to conform with any Wannier states. Thus we expect that as $g$ becomes much larger than the band gap (and, in general, the bandwidth), $|g| \gg 2m$, the effect of the band structure on the breather spectrum can be neglected aside from possible implications for the breathers' stability.

\begin{figure}[t]
\centering
\includegraphics[width=0.3\textwidth,page=3]{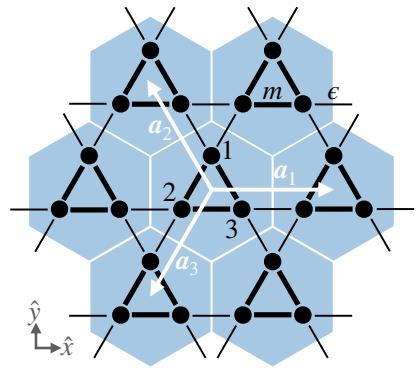}
\caption{Lattice vectors (white arrows), unit cell (blue hexagon), lattice sites (black dots), and hoppings (black lines) for the breathing kagome lattice model in Eq.~\eqref{eq: c3model_realspacehamiltonian}. Strong hoppings inside the unit cell (weighted by $m$) are indicated by thick lines while weak hoppings (weighted by $\epsilon$), which connect different unit cells, are indicated by thin lines.}
\label{fig: c3_model}
\end{figure}

On the other hand, solutions $(\mathrm{II},\pm)$ merge with the linear bands at frequencies $\omega = \mp m$ as $g \rightarrow 0$. We see from comparing Eqs.~\eqref{eq: ssh_molecular_explicitsoln} and~\eqref{eq: ssh_wannierstates} that the breathers $(\mathrm{II},\pm)$ are identical with the Wannier states $W^{\mp m}$.

We deduce that breathers and Wannier states agree in the nonlinear SSH model for sufficiently small nonlinearity strengths $g$. Since two fixed-time slices of a Thouless pumping cycle are topologically equivalent to the SSH model, this result is compatible with earlier findings in Refs.~\onlinecite{ablowitz2021peierls,jurgensen2021,jurgensen2021chern,mostaan2022quantized,jurgensen2022quantized}, which studied nonlinear Thouless pumps and found that the breather center adiabatically follows the Wannier center. In addition to confirming this result in the special case of an inversion-symmetric SSH chain, we have here shown that it extends also to the symmetry data (the inversion eigenvalues) of the breather modes.

\section{Nonlinear breathing kagome lattice} \label{sec: kagome}
We close with an analysis of breather modes resulting from on-site nonlinearity in the 2D breathing kagome lattice with $\mathcal{C}_3$ rotational symmetry.

\subsection{Tight-binding matrix and symmetries} \label{sec: c3modelsymmetry}
We consider a hexagonal crystal with lattice vectors $\bs{a}_1 = (1,0)^\mathrm{T}$ and $\bs{a}_2 = (-1/2,\sqrt{3}/2)^\mathrm{T}$. We also define $\bs{a}_3 = -\bs{a}_1 - \bs{a}_2$ for convenience (see Fig.~\ref{fig: c3_model}). The real-space hopping matrix then reads 
\begin{widetext}
\begin{equation} \label{eq: c3model_realspacehamiltonian}
h_{(\bs{R},\alpha),(\bs{R}',\beta)} =
\begin{pmatrix}
0 & -m \delta_{\bs{R},\bs{R}'} - \epsilon \delta_{\bs{R}-\bs{a}_3,\bs{R}'} & -m \delta_{\bs{R},\bs{R}'} - \epsilon \delta_{\bs{R}+\bs{a}_2,\bs{R}'} \\ 
-m \delta_{\bs{R},\bs{R}'} - \epsilon \delta_{\bs{R}+\bs{a}_3,\bs{R}'} & 0 & -m \delta_{\bs{R},\bs{R}'} - \epsilon \delta_{\bs{R}-\bs{a}_1,\bs{R}'} \\ 
-m \delta_{\bs{R},\bs{R}'} - \epsilon \delta_{\bs{R}-\bs{a}_2,\bs{R}'} &
-m \delta_{\bs{R},\bs{R}'} - \epsilon \delta_{\bs{R}+\bs{a}_1,\bs{R}'} & 0
\end{pmatrix}_{\alpha \beta},
\end{equation}
where $m$ and $\epsilon$ are intra- and inter-unit cell hopping parameters, respectively. After a Fourier transform, the corresponding momentum-space Bloch matrix reads
\begin{equation} \label{eq: kagome_bloch}
    h(\bs{k})_{\alpha \beta} = 
    \begin{pmatrix}
    0 & -m-\epsilon e^{-i (k_1+k_2)} & -m -\epsilon e^{-i k_2} \\ 
    -m-\epsilon e^{i (k_1+k_2)} & 0 & -m -\epsilon e^{i k_1} \\ 
    -m -\epsilon e^{i k_2} & -m -\epsilon e^{-i k_1} & 0
    \end{pmatrix}_{\alpha \beta},
\end{equation}
where we have defined $k_{1,2} = \bs{k} \cdot \bs{a}_{1,2}$.
\end{widetext}
For $\epsilon=0$ ($m=0$), the spectrum consists of three exactly flat bands at frequencies $\omega \in \{-2,1,1\}m$ ($\omega \in \{-2,1,1\}\epsilon$), respectively, which decompose into a single gapped and two exactly degenerate flat bands. This gap persists away from the flat band limit as long as $|m|\neq|\epsilon|$. We will now study breather modes in the nonlinear problem obtained by combining the hopping matrix Eq.~\eqref{eq: c3model_realspacehamiltonian} with the time-independent nonlinear Schrödinger equation Eq.~\eqref{eq:time_indep_DNLSE}.

Together with lattice translation symmetry, this model features two point group generators: a $\mathcal{C}_3$ symmetry perpendicular to the 2D plane with 
\begin{equation} \label{eq: c3_sym_kagome}
    (\mathcal{C}_3)_{(\bs{R},\alpha),(\bs{R}',\beta)} = \delta_{C_3 \bs{R}',\bs{R}} \begin{pmatrix} 0 & 0 & 1 \\ 1 & 0 & 0 \\ 0 & 1 & 0 \end{pmatrix}_{\alpha \beta},
\end{equation}
where $C_3$ implements a $120^\circ$ rotation perpendicular to the 2D real space plane, and an in-plane $\mathcal{C}_2$ rotation that reads
\begin{equation} \label{eq: c2_sym_kagome}
    (\mathcal{C}_2)_{(\bs{R},\alpha),(\bs{R}',\beta)} = \delta_{C_2 \bs{R}',\bs{R}} \begin{pmatrix} 1 & 0 & 0 \\ 0 & 0 & 1 \\ 0 & 1 & 0 \end{pmatrix}_{\alpha \beta},
\end{equation}
where $C_2$ implements a $180^\circ$ rotation about the $\hat{y}$ axis within the 2D plane. These symmetries satisfy $(\mathcal{C}_3)^3 = (\mathcal{C}_2)^2 = \mathbb{1}$ and generate the dihedral point group symmetry $\mathcal{D}_3$ (Schönflies notation), which is also a symmetry of the full nonlinear problem Eq.~\eqref{eq:time_indep_DNLSE} as confirmed in App.~\ref{appendix: symmetries}. Correspondingly, there are four distinct possibilities for the breather symmetry data: Within a (suitably chosen) unit cell and up to a global sign flip which we remove by fixing the convention that $a > 0$, the breather mode has one of four symmetry types:
\begin{enumerate}[(I)]
\item a sixfold multiplet of the form $\phi_\alpha = (a,b,c)_\alpha^\mathrm{T}$, where $a,b,c$ are all distinct and non-zero,
\item a triplet $\phi_\alpha = (a,b,b)_\alpha^\mathrm{T}$ with $\mathcal{C}_2$-eigenvalue $\lambda^{(2)} = +1$,
\item a triplet $\phi_\alpha = (a,-a,0)_\alpha^\mathrm{T}$ with $\mathcal{C}_2$-eigenvalue $\lambda^{(2)} = -1$,
\item a singlet $\phi_\alpha = (a,a,a)_\alpha^\mathrm{T}$ with $\mathcal{C}_3$-eigenvalue $\lambda^{(3)} = +1$ and $\mathcal{C}_2$-eigenvalue $\lambda^{(2)} = +1$.
\end{enumerate}

\subsection{Molecular breather limit} \label{sec: c3_molecular_limit}
The molecular limit is obtained by setting $\epsilon = 0$. As for the SSH model, there is an alternative limit $m = 0$ that is equivalent up to a change of unit cell. We summarize our findings in Fig.~\ref{fig: oai_theory}.

\subsubsection{Solving for molecular breathers}
In the molecular limit $\epsilon = 0$, the discrete breather conditions Eq.~\eqref{eq:time_indep_DNLSE_molecular_limit_reduced} become
\begin{align}
\label{eq:C3db1}
\omega \phi_A &= g\phi_A^3 - m \phi_B - m\phi_C, \\
\omega \phi_B &= - m \phi_A + g\phi_B^3  - m\phi_C, \\
\omega \phi_C &= - m \phi_A - m\phi_B + g\phi_C^3.
\end{align}
We now consider each possibility for the breather symmetry type in turn. To this end, first note that subtracting the above equations pairwise yields
\begin{align}
\label{eq:C3diff1}
(\phi_A-\phi_B)(\omega - m - g(\phi_A^2+\phi_A\phi_B+\phi_B^2)) &= 0, \\
\label{eq:C3diff2}
(\phi_B-\phi_C)(\omega - m - g(\phi_B^2+\phi_B\phi_C+\phi_C^2)) &= 0, \\
\label{eq:C3diff3}
(\phi_C-\phi_A)(\omega - m - g(\phi_C^2+\phi_C\phi_A+\phi_A^2)) &= 0,
\end{align}
while adding them yields
\begin{equation}
\label{eq:C3sum}
g(\phi_A^3 + \phi_B^3 + \phi_C^3)  - (\omega+2m)(\phi_A+\phi_B+\phi_C)= 0.
\end{equation}
We first show that there are no solutions of symmetry Type I identified above. In this case, $ \phi_A = a > 0$, $\phi_B=b$ and $\phi_C = c$ being pairwise distinct and the normalization condition $a^2 + b^2 +c^2 = 1$ in Eqs. \eqref{eq:C3diff1}-\eqref{eq:C3diff3} imply that
\begin{equation}
ab-c^2 = bc -a^2 = ca - b^2 = \frac{\omega - m}{g} - 1.
\end{equation}
Subtracting any two of these equations pairwise yields the condition 
\begin{equation}
\label{eq:C3zerosum}
a+b+c = 0,
\end{equation}
which in Eq. \eqref{eq:C3sum} yields
\begin{equation}
a^3 + b^3 + c^3 = 0.
\end{equation}
Substituting $c = -(b+a)$ from Eq. \eqref{eq:C3zerosum} into this equation, we deduce that
\begin{equation}
3ab(a+b) = 0,
\end{equation}
which implies that $b=0$ or $c=0$, so that by Eq. \eqref{eq:C3zerosum} the breather has symmetry Type III. We deduce that no molecular breathers of symmetry Type I exist.

\begin{figure}[t]
\centering
\includegraphics[width=0.48\textwidth,page=2]{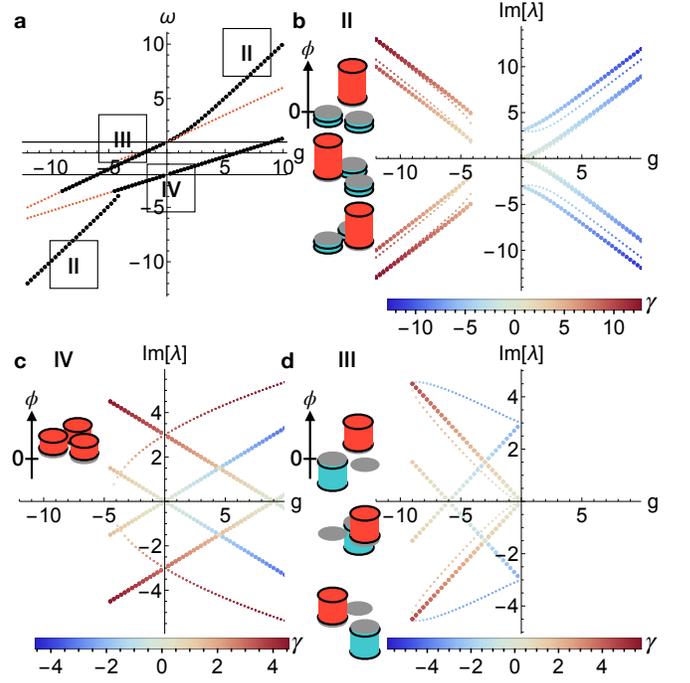}
\caption{Analytical molecular limit breather solutions in the nonlinear breathing kagome lattice (Eqs.~\eqref{eq:time_indep_DNLSE},~\eqref{eq: c3model_realspacehamiltonian}) for $m = 1$, $\epsilon = 0$. \textbf{a}~Breather frequencies derived in Sec.~\ref{sec: c3_molecular_limit}. The frequencies of linearly stable (unstable) breather solutions are shown with thick black (thin red) dots. The linear flat-band spectrum at $\omega \in \{-2,1,1\}$ is indicated by gray horizontal lines. 
At $g\lesssim-4$ and $g>0$, there exist three stable, mostly on-site breather modes with $\mathcal{C}_2$ eigenvalue $\lambda^{(2)}=+1$ and exchanged by $\mathcal{C}_3$ symmetry (Type II). At fixed $g$, their stability eigenvalues shown in panel \textbf{b} all belong to Krein eigenvalues $\gamma$ of the same sign, indicating stability away from the molecular limit. Here, non-degenerate (macroscopically degenerate) stability eigenstates are indicated by thin (thick) dots (see Sec.~\ref{subsec: molecular_stability} for details), and the curves terminate when the respective solution becomes unstable. At $g \gtrsim -4.5$, one further stable breather with $\mathcal{C}_2$ and $\mathcal{C}_3$ eigenvalues $\lambda^{(2)}=\lambda^{(3)}=+1$ (Type IV, panel \textbf{c}) exists that continuously connects with the linear band at $\omega = -2$. For $-9.07m \lesssim g \leq 0$, three more stable two-site breathers with $\mathcal{C}_2$ eigenvalue $\lambda^{(2)}=-1$ and exchanged by $\mathcal{C}_3$ symmetry (Type III, panel \textbf{d}) exist within the gap that connect with the two degenerate bands at $\omega = 1$.}
\label{fig: oai_theory}
\end{figure}

We next turn to molecular breathers of symmetry Type II, with $\phi_A = a$ and $\phi_B = \phi_C = b$. In this case, Eq. \eqref{eq:C3db1} uniquely determines $b$ as a function of $a$ and $\omega$, via
\begin{equation}
b = \frac{a}{2m}(ga^2-\omega).
\end{equation}
However, we can also solve Eq. \eqref{eq:C3diff1} for $b$ to yield
\begin{equation}
\label{eq:quadb}
b = - \frac{a}{2} \pm \sqrt{\frac{\omega-m}{g} - \frac{3a^2}{4}.}
\end{equation}
Solving this pair of equations for $\omega$ we find two distinct solutions as a function of $a$, namely
\begin{equation}
\label{eq:C3TypeIIomega}
\omega_{\pm}(a) = m + ga^2 + \frac{2m^2}{ga^2} \pm m \sqrt{1+ \frac{4m^2}{g^2a^4}},
\end{equation}
with
\begin{equation}
\label{eq:C3TypeIIb}
b_{\pm}(a) = -\left(\frac{a}{2} + \frac{m}{ga} \pm \sqrt{\left(\frac{a}{2}\right)^2+\left(\frac{m}{ga}\right)^2}\right).
\end{equation}
It remains to solve for $a$, which is constrained by the normalization condition $a^2+2b(a)^2 = 1$. The latter implies a cubic equation for $a^2$ of the form
\begin{equation}
\label{eq:C3TypeIIa}
3a^6 + \left(\frac{4m}{g}-4\right)a^4 + \left(1+ \frac{12m^2}{g^2}-\frac{4m}{g}\right)a^2 - \frac{8m^2}{g^2} = 0.
\end{equation}
To analyse these equations it is simplest to first consider the limit of large nonlinearity $|g| \to \infty$. In this limit, Eqs. \eqref{eq:C3TypeIIa} and \eqref{eq:C3TypeIIb} yield
\begin{equation}
a^2(3a^2-1)(a^2-1) = 0, \quad b_{\pm}(a) = - \left(\frac{a}{2} \pm\frac{a}{2}\right).
\end{equation}
Imposing the choice $a>0$, we deduce that there are two distinct branches of solutions as $|g| \to \infty$, each of which picks out a different choice of sign in Eq. \eqref{eq:C3TypeIIomega}, namely
\begin{align}
\phi^{\mathrm{II}}_+ \sim  \frac{1}{\sqrt{3}}\begin{pmatrix} 1 \\ -1 \\ -1\end{pmatrix}, \quad \omega^{\mathrm{II}}_+ \sim \frac{g}{3}, \quad |g| \to \infty
\end{align}
and
\begin{align}
\phi^{\mathrm{II}}_- \sim \begin{pmatrix} 1 \\ 0 \\ 0\end{pmatrix}, \quad  \omega^{\mathrm{II}}_- \sim g, \quad |g| \to \infty.
\end{align}
In the opposite limit $|g| \to 0^+$, we find that
\begin{equation}
3a^2-2 = 0, \quad b_{\pm}(a) = - \left(\frac{a}{2} + \frac{m}{ga} \pm \frac{m}{ga} + \mathcal{O}(g/m) \right),
\end{equation}
which picks out the normalizable solution
\begin{equation} \label{eq: kagome_breather_ii_gtozero}
\phi^{\mathrm{II}}_- \sim \frac{1}{\sqrt{6}}\begin{pmatrix} 2 \\ -1 \\ -1 \end{pmatrix}, \quad  \omega^{\mathrm{II}}_- \sim m, \quad |g| \to 0.
\end{equation}
Thus only the $\phi^{\mathrm{II}}_-$ solution, with
\begin{equation}
\phi^{\mathrm{II}}_- = \begin{pmatrix} a \\ b_{-}(a) \\ b_-(a) \end{pmatrix}, \quad \omega^{\mathrm{II}}_- = \omega_-(a), 
\end{equation}
with $\omega_-(a)$ and $b_-(a)$ given by Eqs. \eqref{eq:C3TypeIIomega} and \eqref{eq:C3TypeIIb} respectively, and $a$ fixed by normalization $a^2 + 2b_-(a)^2 = 1$, is able to interpolate between $|g| = \infty$ and $|g| = 0^+$ and therefore connect continuously to the linear band.

By solving the system numerically, we find that these two limits are in fact continuously connected such that the solution $\phi^{\mathrm{II}}_-$ always forms a frequency-degenerate triplet orbit of $\mathcal{C}_3$ with $\mathcal{C}_2$-eigenvalue $\lambda^{(2)} = +1$ (see end of Sec.~\ref{sec: c3modelsymmetry}).

Molecular breathers of the remaining symmetry types are rather easier to determine. The normalized Type III solution with our sign convention $\phi_A = a> 0$ has $a = 1/\sqrt{2}$, and its frequency follows by e.g. Eq. \eqref{eq:C3db1}. Thus the Type III molecular breather is given by
\begin{equation} \label{eq: kagome_iii_solution}
\phi^\mathrm{III} = \frac{1}{\sqrt{2}}\begin{pmatrix}
    1 \\ -1 \\ 0
\end{pmatrix}, \quad
\omega^\mathrm{III} = m + \frac{g}{2},
\end{equation}
and corresponds to a triplet orbit of $\mathcal{C}_3$ with $\mathcal{C}_2$-eigenvalue $\lambda^{(2)} = -1$.

Finally, the normalized Type IV solution with $\phi_A > 0$ is straightforwardly found to satisfy
\begin{equation} \label{eq: c3_symmetric_breathermode}
\phi^\mathrm{IV} = \frac{1}{\sqrt{3}}\begin{pmatrix}
    1 \\ 1 \\ 1
\end{pmatrix}, \quad
\omega^\mathrm{IV} = -2m + \frac{g}{3}.
\end{equation}
This corresponds to a singlet with $\mathcal{C}_3$-eigenvalue $\lambda^{(3)} = +1$ and $\mathcal{C}_2$-eigenvalue $\lambda^{(2)} = +1$.

\begin{figure}[t]
\centering
\includegraphics[width=\columnwidth]{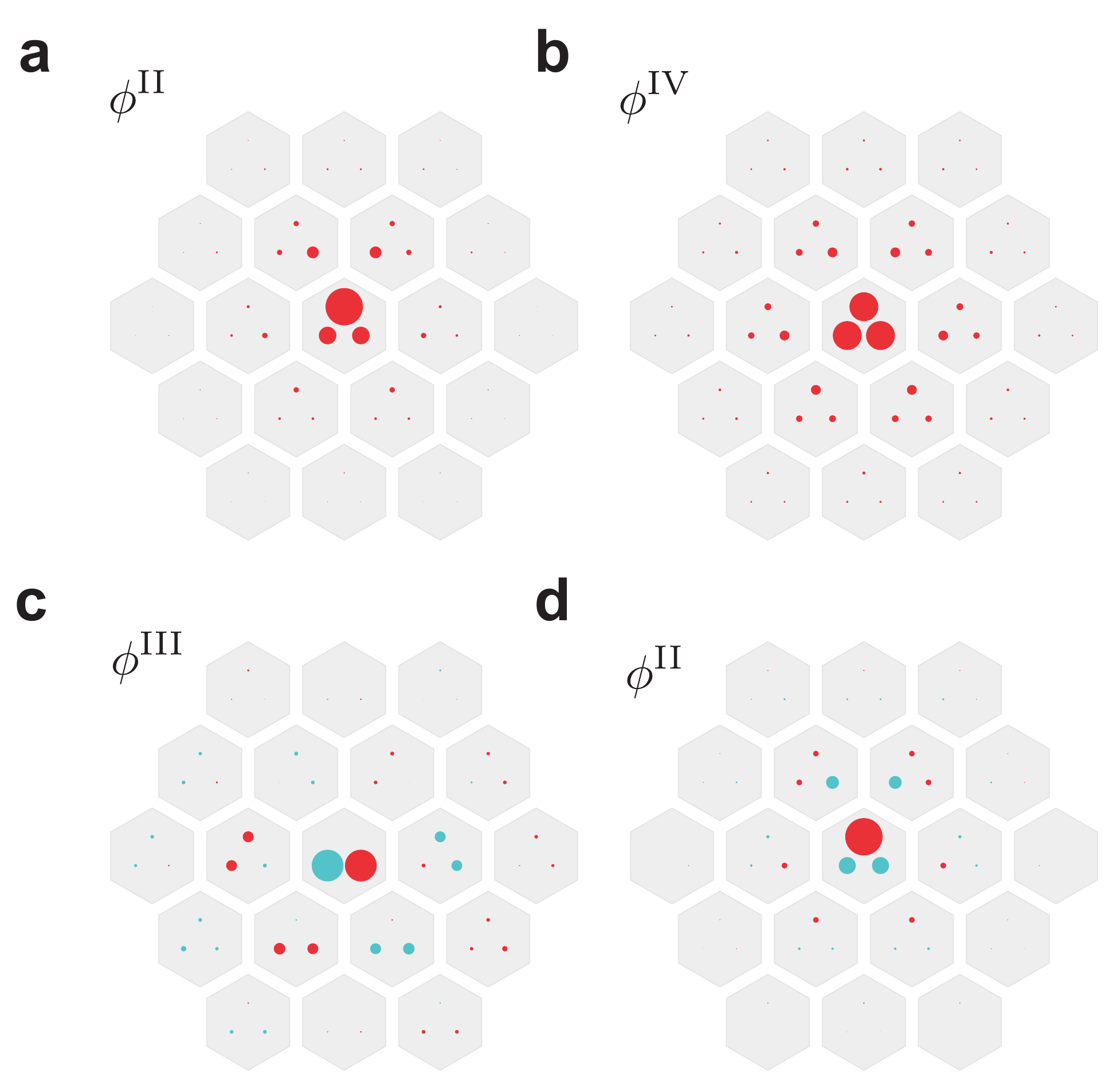}
\caption{Numerical in-gap breather profiles $\phi_{(\bs{R},\alpha)}$ in the nonlinear breathing kagome lattice (Eqs.~\eqref{eq:time_indep_DNLSE},~\eqref{eq: c3model_realspacehamiltonian}) away from the molecular limit. In all plots, $m=1$; in \textbf{a,d} $\epsilon=0.5$, while in \textbf{b,c} $\epsilon=0.25$; additionally, $g=-6$ and $\omega=-5.89$ in \textbf{a}, $g=-4$ and $\omega=-3.34$ in \textbf{b}, $g=-4$ and $\omega=-0.96$ in \textbf{c}, and $g=+4$ and $\omega=4.1$ in \textbf{d}. The area of the disks is proportional to the magnitude of the field and the red (teal) color represents a sign of $+1$ ($-1$).}
\label{fig: c3_numerical_breathers}
\end{figure}

The dispersion of these solutions with $g$, their linear stability eigenvalues (Sec.~\ref{sec:linear_stability}), and their Krein eigenvalues (Sec.~\ref{sec: kreinsignature}) are summarized in Fig.~\ref{fig: oai_theory}. For a detailed results on these stability spectra, including closed form expressions where available, we refer to Appendix \ref{app:BKLstab}.

\begin{figure}[t]
\centering
\includegraphics[width=\columnwidth]{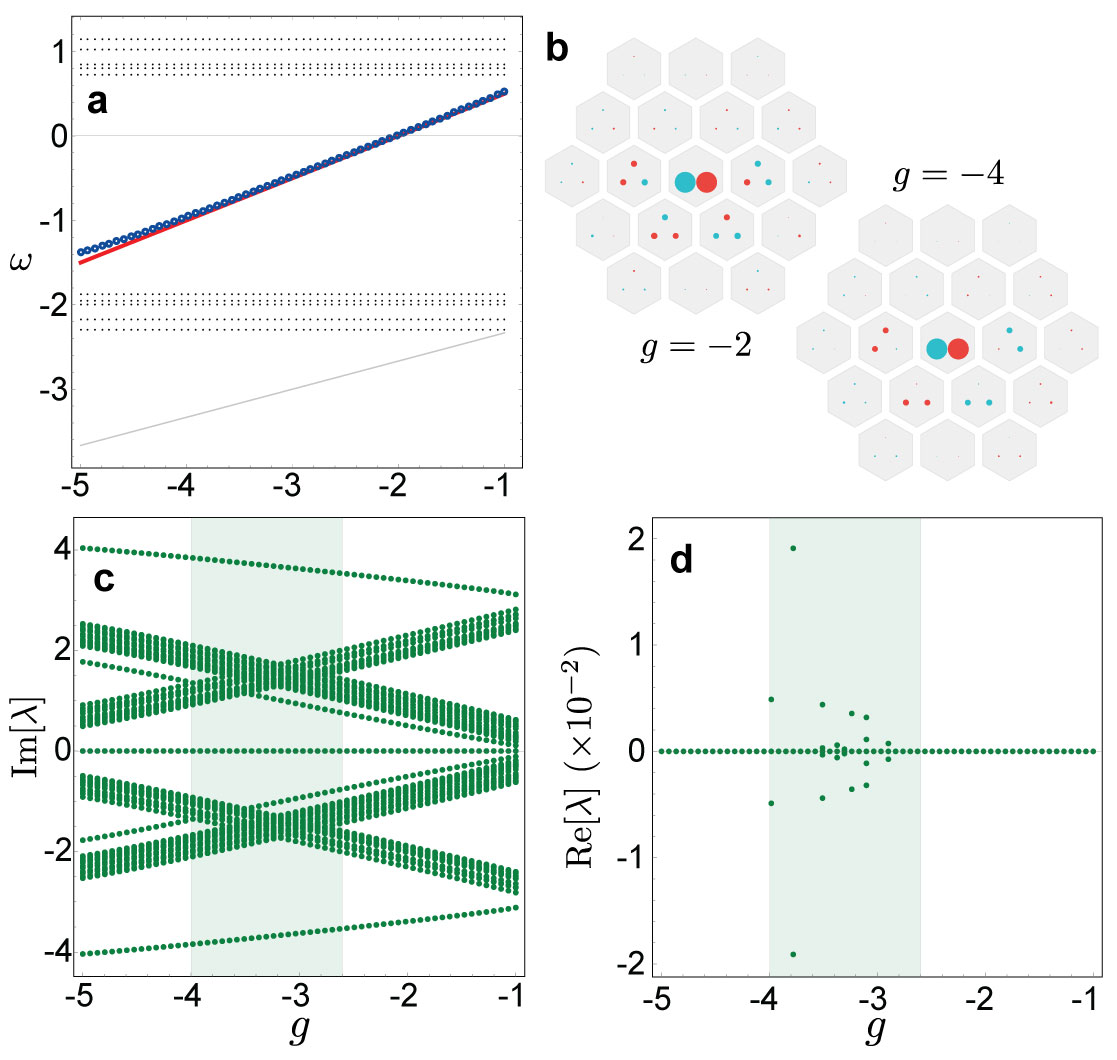}
\caption{Numerical breather solutions of Type III in the nonlinear breathing kagome lattice (Eqs.~\eqref{eq:time_indep_DNLSE},~\eqref{eq: c3model_realspacehamiltonian}) with $m=1$, $\epsilon=0.25$. {\bf a} Analytical (red line) and numerical (blue dots) frequency $\omega$. {\bf b} Breather profiles $\phi_{(\bs{R},\alpha)}$ for $g=-2$ and $g=-4$. {\bf c, d} Stability eigenvalues $\lambda$. Instabilities develop in the region $|g| \approx -3m$ (absent in the molecular limit), where stability eigenvalues with opposite Krein eigenvalues collide (Fig.~\ref{fig: oai_theory}d).}
\label{fig: c3_numerical_breathers_stability}
\end{figure}

\subsection{Perturbing away from the molecular limit} \label{sec: c3numerics}
Away from the molecular limit, i.e., when $\epsilon \neq 0$, the breather solutions we have derived spread beyond their home unit cell (situated at $\bs{R} = \bs{0}$ in Eq.~\eqref{eq: mostlocalized_soln}). Examples of these breathers for all Types II-IV are shown in Fig.~\ref{fig: c3_numerical_breathers} for non-zero $\epsilon$. In all cases, the symmetry data of the molecular breather is preserved. As in the 1D case, linear instabilities away from the molecular limit indeed only arise when stability eigenvalues with opposite Krein eigenvalues collide, as is evident from comparing Fig.~\ref{fig: c3_numerical_breathers_stability} with Fig.~\ref{fig: oai_theory}d. Robustness of our predictions to the full nonlinear dynamics of the nonlinear breathing Kagome lattice Hamiltonian is established through direct numerical simulation in Appendix \ref{app:StabTest2D}.

\subsection{Comparison of breathers and Wannier states}
We now compare the breather solutions we have found analytically and confirmed numerically with the Wannier states of the linearized band structure.

Let us first recall the Wannier states of the linear hopping model, with Bloch matrix given in Eq.~\eqref{eq: kagome_bloch}. In the molecular limit $\epsilon = 0$, there is a single flat band at frequency $\omega = -2m$, and two degenerate flat bands at frequency $\omega = m$. In this limit, the Wannier states become eigenstates of the hopping matrix Eq.~\eqref{eq: c3model_realspacehamiltonian}. 

In the unit cell at $\bs{R} = 0$, the Wannier state at the frequency $\omega = -2m$ reads
\begin{equation} \label{eq: kagome-wanniers}
    W^{-2m}_{(\bs{R},\alpha)} = \delta_{\bs{R},\bs{0}} W^{-2m}_{\alpha}, \quad W^{-2m}_{\alpha} = \frac{1}{\sqrt{3}} \begin{pmatrix} 1 \\ 1 \\ 1 \end{pmatrix}.
\end{equation}
This Wannier state is indeed the same as the breather merging with the band at $\omega = -2m$ as $g \rightarrow 0$: Comparing Eqs.~\eqref{eq: c3_symmetric_breathermode} and~\eqref{eq: kagome-wanniers}, we see that $W^{-2m} = \phi^\mathrm{IV}$.

The Wannier states at the frequency $\omega = m$ are twofold degenerate and so depend on a choice of basis. It is common to choose them to be eigenstates of the $\mathcal{C}_3$ symmetry in Eq.~\eqref{eq: c3_sym_kagome}. However, this convention requires complex amplitudes in the Wannier state wave functions, preventing a meaningful comparison with the breather solutions that are fully real because of TRS. (Note that the breathers do not enjoy the same gauge freedom as the Wannier states, because they solve a nonlinear equation that does not remain invariant under a change of basis.) Therefore, we instead choose the Wannier states to be the following real eigenstates of the $\mathcal{C}_2$ symmetry in Eq.~\eqref{eq: c2_sym_kagome}:
\begin{equation} \label{eq: kagome_wanniers_degenerate}
\begin{aligned}
    &W^{m,1}_{(\bs{R},\alpha)} = \delta_{\bs{R},\bs{0}} W^{m,1}_{\alpha}, \quad W^{m,1}_{\alpha} = \frac{1}{\sqrt{6}} \begin{pmatrix} 2 \\ -1 \\ -1 \end{pmatrix}, \\
    &W^{m,2}_{(\bs{R},\alpha)} = \delta_{\bs{R},\bs{0}} W^{m,2}_{\alpha}, \quad
    W^{m,2}_{\alpha} = \frac{1}{\sqrt{2}} \begin{pmatrix} 0 \\ 1 \\ -1 \end{pmatrix}.
\end{aligned}
\end{equation}
Here, $W^{m,1}$ ($W^{m,2}$) has $\mathcal{C}_2$-eigenvalue $\lambda^{(2)} = 1$ ($\lambda^{(2)} = -1$), respectively. By comparing Eqs.~\eqref{eq: kagome_breather_ii_gtozero},~\eqref{eq: kagome_iii_solution} with Eq.~\eqref{eq: kagome_wanniers_degenerate}, we see that $W^{m,1}_{\alpha} = \phi^{\mathrm{II}}_-$, while $W^{m,2}_{\alpha}$ is related to $\phi^\mathrm{III}$ by a  $\mathcal{C}_3$ rotation. Since $\phi^\mathrm{III}$ is part of a frequency-degenerate triplet of breathers exchanged by $\mathcal{C}_3$ symmetry, both the Wannier states $W^{m,1}$ and $W^{m,2}$ are recovered by breather solutions in the limit $g \rightarrow 0$. 

However, it is important to note that the multiplicities of breathers and Wannier states do not match: For small negative $g < 0$, there are three breather solutions of Type III (Fig.~\ref{fig: oai_theory}) that merge into a single Wannier state $W^{m,2}$ as $g \rightarrow 0$. For small positive $g>0$, the single Wannier state $W^{m,1}$ likewise branches out into three breathers of Type II. This mismatch is imposed by $\mathcal{C}_3$ symmetry: while the two Wannier states $W^{m,1}$ and $W^{m,2}$ span a two-dimensional representation of $\mathcal{C}_3$ symmetry corresponding to the \emph{complex} $\mathcal{C}_3$ eigenvalues $e^{\mathrm{i} \frac{2\pi}{3}}$, the breather solutions must either have real $\mathcal{C}_3$ eigenvalues or form triplets permuted by $\mathcal{C}_3$ (see also Sec.~\ref{sec: symmetry_data_introduction}). It is also interesting to note that different Wannier states, with different $\mathcal{C}_2$ eigenvalues, are recovered by the breather solutions in different one-sided limits $g \rightarrow 0^+$ and $g \rightarrow 0^-$.

\section{Conclusion}
We have introduced a general framework for understanding nonlinear breather excitations and their crystalline symmetry data in band structures with a molecular limit. We then applied this framework to derive the breather spectrum in nonlinear generalizations of the 1D SSH chain and the 2D breathing kagome lattice. In both cases, we found two qualitatively different possible scenarios for nonlinear breathers: (1) on-site or mostly on-site breathers that continuously approximate the delta-localized breathers of the DNLSE in the anti-continous limit where $g \rightarrow \infty$ and break the crystalline symmetry. (2) Wannier-type breathers that are continuously connected to a subset of the Wannier states of the linear band structure in the limit $g \rightarrow 0$ and retain some crystalline symmetry.

Importantly, while we found that the center of mass and crystalline symmetry data of breathers in scenario (2) do in general agree with those of appropriately chosen Wannier states, this does not imply a one-to-one correspondence between breathers and Wannier states. In the case of $N$-fold degenerate bands (or, more generally, $N$ bands that cannot be gapped everywhere), there is a $U(N)$ gauge freedom in defining Wannier states. However, the specific nonlinear terms considered in this paper break $U(N)$ symmetry, so that there is no analogous gauge freedom in defining discrete breathers. Even when we restrict to real Wannier states in systems with TRS, these Wannier states exhibit an $O(N)$ gauge symmetry with no counterpart in the breather spectrum. We note that it is possible to construct nonlinear Schr{\"o}dinger equations that respect additional global symmetries~\cite{fordy1983nonlinear,faddeev2007hamiltonian}, which might lead to a closer correspondence between breathers and Wannier states, but leave more detailed investigation of such nonlinearities to future work.

Thus for nonlinearities of the type studied in this paper, the $g \rightarrow 0$ limit of a given breather solution picks out one allowed Wannier state among possibly infinitely many equally valid choices. We have seen in the example of the nonlinear breathing Kagome lattice that the specific choice of Wannier state directly determines the crystalline symmetry data of the breather solution and thereby also the multiplicity of the associated frequency-degenerate multiplet. Moreover, we have seen that the specific choice of Wannier state -- which fixes the topologically stable properties of the breather mode -- depends sensitively on the direction in which the limit $g \rightarrow 0$ is approached.

In conclusion, we have shown that it is not in general possible to predict the breather spectrum of a given nonlinear lattice model directly from the linear Wannier states, even for band structures for which these can be exponentially localised and made to conform with all crystalline symmetries and TRS. Nevertheless, for band structures with a molecular limit, the formalism developed in this work can be used to analytically predict the existence, linear stability, and crystalline symmetry data of breather solutions. It remains an open question whether a similar analytical understanding could be achieved for topological band structures, which by definition do not admit any molecular limit~\cite{Bradlyn17}.

\begin{acknowledgments}
W.A.B. and F.S. thank the Kavli Institute for Theoretical Physics for hosting during some stages of this work. V.B.B. and F.S. were supported by a fellowship at the Princeton Center for Theoretical Science during some stages of this work. This research was supported in part by the National Science Foundation under Grant No. NSF PHY-1748958. This work was supported by a UKRI Future Leaders Fellowship MR/Y017331/1. W.A.B. thanks the support of startup funds from Emory University.
\end{acknowledgments}

\appendix

\section{Linear stability theory of discrete breathers}
\label{app:stability}
In this Appendix, we provide a self-contained account of the linear stability theory of discrete breathers in tight-binding models with cubic nonlinearity, using the techniques of Hamiltonian stability theory presented in Ref.~\cite{kapitula2013spectral}.
\subsection{Hamiltonian stability theory} \label{sec:linear_stability}
Discrete breather solutions of Eq. \eqref{eq:time_indep_DNLSE} above have the useful property that they are stationary points of a Hamiltonian evolution with respect to $\tilde{H}$. This allows us to study their stability properties using the powerful machinery of Hamiltonian stability theory~\cite{kapitula2013spectral}.

To make the Hamiltonian structure manifest, it is helpful to introduce standard canonical coordinates~\cite{mendl2015low} given by the real and imaginary parts of the field $\widetilde{\Phi}_a$, namely
\begin{equation}
\tilde{\Phi}_a = \frac{1}{\sqrt{2}}(q_a + ip_a),
\end{equation}
with respect to which the time evolution Eq.~\eqref{eq:rotated_frame} can be written as Hamilton's equations
\begin{align}
\dot{p}_a = -\frac{\partial \tilde{H}}{\partial q_a}, \quad
\dot{q}_a = \frac{\partial \tilde{H}}{\partial p_a}.
\end{align}
Introducing the two-by-two symplectic form
\begin{equation}
J = \begin{pmatrix} 0 & -1 \\ 1 & 0 \end{pmatrix},
\end{equation}
we can write Hamilton's equations as a pair of equations
\begin{equation}
\begin{pmatrix} \dot{p}_a \\ \dot{q}_a \end{pmatrix} = J \begin{pmatrix} \partial_{p_a} \tilde{H} \\ \partial_{q_a} \tilde{H} \end{pmatrix}
\end{equation}
for each $a$. Let us now consider perturbing about a stationary point $\dot{\bar{p}}_a= \dot{\bar{q}}_a = 0$, with $p_a(t) = \bar{p}_a + \delta p_a(t)$ and $q_a(t) = \bar{q}_a + \delta q_a(t)$. Then the linearized equations of motion are given by
\begin{equation}
    \begin{pmatrix} \delta \dot{p}_a \\ \delta \dot{q}_a \end{pmatrix} = 
    \sum_{b} J\mathcal{L}_{ab} \begin{pmatrix} \delta p_b  \\ \delta q_b \end{pmatrix},
\end{equation}
where we have defined the matrix
\begin{equation}
\mathcal{L}_{ab} = \begin{pmatrix} \tilde{h}_{ab}  + \frac{g}{2}(\bar{q}_a^2 + 3 \bar{p}^2_a)\delta_{ab} & g \bar{p}_a \bar{q}_a \delta_{ab} \\ g \bar{p}_a \bar{q}_a \delta_{ab}& \tilde{h}_{ab}  + \frac{g}{2}(3\bar{q}_a^2 + \bar{p}^2_a)\delta_{ab} \end{pmatrix}.
\end{equation}
Thus the linear stability of stationary points is determined by the spectrum of the matrix $J\mathcal{L}$, viewed as a $2 N \times 2N$ matrix where $N$ is the dimension of $\tilde{h}$. In particular, any eigenvalues with positive real part will give rise to linear instability. Consider now eigenvalues $\lambda$, such that
\begin{equation} \label{eq: v_definition}
J\mathcal{L} \vec{v} = \lambda \vec{v}
\end{equation}
for some $\vec{v}$. In classical Hamiltonian systems, all eigenvalues $\lambda$ of $J\mathcal{L}$ are constrained~\cite{kapitula2013spectral} to lie in a ``quartet'' of eigenvalues $\{\lambda,\lambda^*,-\lambda,-\lambda^*\}$, which may have one, two or four distinct elements, depending on the value of $\lambda$. To see this in the present context, first note that TRS of $h$ guarantees that $\mathcal{L}$ is real and therefore that $\lambda^*$ is also an eigenvalue, while acting with $J^{-1}$ yields
\begin{equation}
-\mathcal{L}J(J^{-1}\vec{v}) = -\lambda(J^{-1}\vec{v}),
\end{equation}
so that 
\begin{equation}
\mathrm{det}(-\mathcal{L}J - (-\lambda) \mathbb{1}) = \mathrm{det}(-J^T \mathcal{L}^T - (-\lambda) \mathbb{1}) = 0.
\end{equation}
Then the fact that $J^T = - J$, combined with the observation that $\mathcal{L}^T = \mathcal{L}$ by our assumptions of both Hermiticity and reality for $h$, implies
\begin{equation}
\mathrm{det}(J\mathcal{L} - (- \lambda) \mathbb{1}) = 0.
\end{equation}
We deduce that either the stationary point is linearly unstable or all eigenvalues of $J\mathcal{L}$ lie on the imaginary axis.

For real discrete breathers of the type defined above, we have $\bar{p}_a=0, \, \bar{q}_a = \sqrt{2}\phi_a$, yielding the simplification
\begin{equation}
\label{eq:Ldef}
\mathcal{L}_{ab} = \begin{pmatrix} \tilde{h}_{ab}  + g\phi_a^2 \delta_{ab} & 0 \\ 0 & \tilde{h}_{ab}  + 3 g\phi_a^2\delta_{ab}
\end{pmatrix} = \begin{pmatrix} (\mathcal{L}_{-})_{ab} & 0 \\ 0 & (\mathcal{L}_{+})_{ab}
\end{pmatrix},
\end{equation}
where
\begin{equation}
(\mathcal{L}_\pm)_{ab} = \tilde{h}_{ab} + (2 \pm 1) g \phi_{a}^2 \delta_{ab}.
\end{equation}
Thus the stability matrix reads
\begin{equation}
\label{eq: stabilityham}
J\mathcal{L} = \begin{pmatrix} 0 & -\mathcal{L}_+\\ \mathcal{L}_- & 0 \end{pmatrix}.
\end{equation}
In the present context, the matrix $J\mathcal{L}$ further has an exact zero eigenvalue, which is a consequence of the $U(1)$ symmetry of the dynamics. To see this, observe that each independent generator of a continuous symmetry of Eq.~\eqref{eq:rotated_frame} will yield a stationary solution to the linearized dynamics Eq.~\eqref{eq: stabilityham}. For the systems under consideration, the only continuous symmetry is a global $U(1)$ symmetry, whose Lie algebra $i\mathbb{R}$ acts on the target space as
\begin{equation} \label{eq: u1algebra_action}
\delta_{\epsilon} \tilde{\Phi}_{a}  = i\epsilon \tilde{\Phi}_{a}.
\end{equation}
At a given stationary point, $\tilde{\Phi} = \frac{1}{\sqrt{2}}(\bar{q}+i\bar{p})$, this flow is generated by the tangent vector
\begin{equation}
\begin{pmatrix} \delta p_a \\ \delta q_a \end{pmatrix} = \begin{pmatrix} \bar{q}_a \\ -\bar{p}_a \end{pmatrix}.
\end{equation}
Since this flow commutes with the time evolution Eq. ~\eqref{eq:rotated_frame}, it must always lie in the kernel of $J\mathcal{L}$. Specializing to the discrete breather solution $\tilde{\Phi}_a=\phi_a \in \mathbb{R}$, we deduce that the vector
\begin{equation}
\begin{pmatrix} \delta p_a \\ \delta q_a \end{pmatrix} = \begin{pmatrix} \phi_a \\ 0 \end{pmatrix}
\end{equation}
has stability eigenvalue zero. This also follows directly from Eqs.~\eqref{eq:time_indep_DNLSE} and \eqref{eq:Ldef}, upon noting that the former can be written as $\mathcal{L}_- \phi = 0$. 

\subsection{Perturbing away from solvable points} 
\label{subec:perturbingaway}
The spectrum of $J\mathcal{L}$ determines whether a given stationary point is linearly stable or unstable with respect to perturbations in phase space. In the present context an additional difficulty arises: discrete breather states solving Eq. \eqref{eq:time_indep_DNLSE} may be difficult to obtain analytically or even numerically for general values of $h_{ab}$ and $g$. In the spirit of MacKay and Aubry's ``anti-continuous limit''~\cite{MacKay1994,Aubry}, we can instead start from an analytically tractable combination of hopping matrix $h^{(0)}$ and interaction strength $g^{(0)}$, at which a breather solution $\phi^{(0)}$ to Eq. \eqref{eq:time_indep_DNLSE} both exists and is linearly stable. We would then like to argue either analytically (if possible) or numerically for existence and linear stability of breathers for some non-zero range of model parameters away from this point. 

To address this, let us restrict our attention to model parameters $(h,g)$ in some neighbourhood of $(h^{(0)},g^{(0)})$, in which a discrete breather solution $\phi$ both exists and is continuous in the model parameters. This implies that the coefficients in the characteristic polynomial of $J\mathcal{L}$ are continuous in the model parameters, and therefore that the eigenvalues of $J\mathcal{L}$ vary continuously in the complex plane. Due to the requirement that eigenvalues of $J\mathcal{L}$ form ``quartets'' in the complex plane, the only way in which unstable eigenvalues can arise is if two or more conjugate pairs of eigenvalues on the imaginary axis coalesce and move off the imaginary axis. Thus linear stability is guaranteed on neighbourhoods of $(h^{(0)},g^{(0)})$ in which no such bifurcations can occur.

The capacity of imaginary eigenvalues of $J\mathcal{L}$ to bifurcate is captured by the properties of their so-called \emph{Krein matrix}~\cite{kapitula2013spectral}, which to the best of our knowledge was first applied to breather physics in the foundational paper of MacKay and Aubry~\cite{MacKay1994}. We now summarize the parts of this approach that are used in the main text.

\subsection{Krein matrix and eigenvalues}
\label{sec: kreinsignature}
The basic result in the theory of the Krein matrix is the so-called ``Krein dichotomy''~\cite{kapitula2013spectral}: for any stability eigenvector $J\mathcal{L}\vec{v} = \lambda \vec{v}$, 
\begin{equation}
\lambda \notin i\mathbb{R} \implies \vec{v}^\dagger \mathcal{L} \vec{v} = 0.
\end{equation}
The Krein dichotomy follows directly from the identity
\begin{equation}
(\lambda + \lambda^*) \vec{v}^\dagger \mathcal{L} \vec{v} = 0,
\end{equation}
which can be deduced from the definition of $\vec{v}$, Eq.~\eqref{eq: v_definition}, in conjunction with the Hermiticity properties $J^\dagger = - J, \, \mathcal{L}^\dagger = \mathcal{L}$.

To motivate the Krein matrix in the present context, we assume that there exists a $P$-parameter family of deformations of the Hamiltonian $\tilde{H}(\epsilon)$, such that the discrete breather solution $\phi(\epsilon)$ exists and is continuous in $\epsilon$ in some neighbourhood of $\epsilon = 0 \in \mathbb{R}^P$. We are interested in the case that the breather solution is linearly stable at $\epsilon = 0$, and would like to understand when this stability persists for $\epsilon \neq 0$.

In order for such stability to persist, it is sufficient that none of the degenerate eigenvalues $\lambda$ at $\epsilon=0$ bifurcate. Let us suppose that among the stability eigenvalues of $\phi(\epsilon)$, there are $d$ stability eigenvalues $\lambda^{(1)}(\epsilon),\lambda^{(2)}(\epsilon),\ldots \lambda^{(d)}(\epsilon)$ with associated eigenvectors $\vec{v}^{(1)}(\epsilon),\vec{v}^{(2)}(\epsilon),\ldots,\vec{v}^{(d)}(\epsilon)$, that coalesce at $\epsilon = 0$, \emph{i.e.}, $\lambda^{(j)}(0) = \lambda \in i\mathbb{R}$ for all $j=1,2,\ldots,d$, with $\mathrm{Im}[\lambda]>0$. (The case $\lambda=0$ either arises due to the $U(1)$ symmetry in Eq.~\ref{eq: u1algebra_action}, in which case the corresponding eigenvalues are pinned at zero and cannot bifurcate for any $\epsilon$, or must be analyzed using different methods~\cite{kapitula2013spectral}.)

We further assume that the algebraic and geometric multiplicity of $\lambda$ are both equal to $d$. We then have the following Proposition, which summarizes the well-known results of Krein~\cite{kapitula2013spectral,CHERNYAVSKY201848} in a simple and self-contained form that is sufficient for our purposes below:
\begin{prop}
\label{prop1}
Let $\lambda^{(1)},\lambda^{(2)},\ldots,\lambda^{(d)}$ be as above, and define the Hermitian ``Krein matrix''
\begin{equation}
\label{eq: kreinmatrix}
K_{ij} = \vec{v}^{(i)}(0)^\dagger \mathcal{L}(0)\vec{v}^{(j)}(0)
\end{equation}
from their eigenvectors. Then if $K$ is either positive definite or negative definite, there is an open ball $\|\epsilon\| < r_\lambda$ with $r_\lambda > 0$ in which $\lambda$ does not bifurcate away from the imaginary axis.
\end{prop}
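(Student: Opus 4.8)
The plan is to reduce the statement to an elementary fact about indefinite inner-product spaces: a linear operator that is skew-adjoint with respect to a \emph{definite} Hermitian form on a finite-dimensional space has purely imaginary spectrum and is diagonalizable. The hypothesis that $K$ is definite says exactly that the form $\langle\vec u,\vec w\rangle_{\mathcal L}:=\vec u^\dagger\mathcal L(0)\vec w$, restricted to the $d$-dimensional $\lambda$-eigenspace, is definite; the substance of the proof is that this definiteness survives the perturbation, after which the dynamics $J\mathcal L(\epsilon)$ restricted to the invariant subspace born from $\lambda$ is pinned to the imaginary axis. Heuristically, this is just the familiar fact that an imaginary eigenvalue cluster of definite Krein signature cannot escape $i\mathbb R$.

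First I would record the algebraic input, the matrix shadow of the Krein dichotomy quoted above: from $(J\mathcal L)^\dagger=\mathcal L^\dagger J^\dagger=-\mathcal L J$, hence $(J\mathcal L)^\dagger\mathcal L=-\mathcal L(J\mathcal L)$, one obtains $\langle(J\mathcal L)\vec u,\vec w\rangle_{\mathcal L}=-\langle\vec u,(J\mathcal L)\vec w\rangle_{\mathcal L}$, i.e.\ $J\mathcal L$ is $\mathcal L$-skew-adjoint. Next I would set up perturbation theory of the isolated eigenvalue $\lambda$. Since $\mathrm{Im}[\lambda]>0$, $\lambda\neq\lambda^*=-\lambda$, and by hypothesis $\lambda$ is isolated in the spectrum of $J\mathcal L(0)$ with algebraic multiplicity $d$. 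Fixing a small contour $\Gamma$ around $\lambda$ that encloses no other eigenvalue, the Riesz projection $\Pi(\epsilon)=\frac{1}{2\pi i}\oint_\Gamma(z-J\mathcal L(\epsilon))^{-1}\,dz$ is well defined and continuous for $\|\epsilon\|$ small, with constant rank $d$ and $\mathrm{Ran}\,\Pi(0)$ equal to the $\lambda$-eigenspace spanned by $\vec v^{(1)}(0),\dots,\vec v^{(d)}(0)$. Applying $\Pi(\epsilon)$ to this fixed basis produces a continuous basis $\vec w^{(1)}(\epsilon),\dots,\vec w^{(d)}(\epsilon)$ of the invariant subspace $E(\epsilon):=\mathrm{Ran}\,\Pi(\epsilon)$, and the eigenvalues of $J\mathcal L(\epsilon)$ inside $\Gamma$ — which are exactly the coalescing $\lambda^{(1)}(\epsilon),\dots,\lambda^{(d)}(\epsilon)$ for $\|\epsilon\|$ small — are those of $J\mathcal L(\epsilon)$ restricted to $E(\epsilon)$.

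I would then form the two continuous $d\times d$ matrices $\hat K_{ij}(\epsilon)=\vec w^{(i)}(\epsilon)^\dagger\mathcal L(\epsilon)\vec w^{(j)}(\epsilon)$ and $\hat A(\epsilon)$, the matrix of $J\mathcal L(\epsilon)$ restricted to $E(\epsilon)$ in the basis $\{\vec w^{(j)}(\epsilon)\}$, whose eigenvalues are $\{\lambda^{(j)}(\epsilon)\}$. Restricting the skew-adjointness identity to the invariant subspace $E(\epsilon)$ gives the matrix relation $\hat A(\epsilon)^\dagger\hat K(\epsilon)=-\hat K(\epsilon)\hat A(\epsilon)$. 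Because $\hat K(0)=K$ is definite while $\hat K(\epsilon)$ is Hermitian with continuous entries, its eigenvalues keep a fixed sign for $\|\epsilon\|<r_\lambda$, some $r_\lambda>0$, so $\hat K(\epsilon)=\pm B(\epsilon)^\dagger B(\epsilon)$ with $B(\epsilon)$ invertible; substituting shows $B(\epsilon)\hat A(\epsilon)B(\epsilon)^{-1}$ is anti-Hermitian. Its spectrum — which equals that of $\hat A(\epsilon)$, i.e.\ $\{\lambda^{(j)}(\epsilon)\}$ — therefore lies on $i\mathbb R$ for $\|\epsilon\|<r_\lambda$ (and the conjugated matrix is diagonalizable, so the degeneracy is harmless). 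This is exactly the assertion that $\lambda$ does not bifurcate off the imaginary axis in the ball $\|\epsilon\|<r_\lambda$.

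The main obstacle, and the place where the stated hypotheses are genuinely used, is the perturbative bookkeeping: one must check that algebraic multiplicity $=$ geometric multiplicity $=d$ makes $K$ — the Gram matrix of honest eigenvectors — the right object (rather than one built from generalized eigenvectors, which would force Jordan-form refinements), and that $\Pi(\epsilon)$ retains rank $d$ so that $E(\epsilon)$ captures precisely the $d$ eigenvalues that coalesced and no others. One must also note that the excluded case $\lambda=0$ does not occur here, and that no collision with $\lambda^*=-\lambda$ takes place inside $\Gamma$ because $\mathrm{Im}[\lambda]>0$. Granting these points, the argument reduces to the two elementary facts flagged above — $\mathcal L$-skew-adjointness of $J\mathcal L$ and the imaginary spectrum of skew-adjoint operators on definite inner-product spaces — together with continuity of the eigenvalues of a Hermitian matrix.
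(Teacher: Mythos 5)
Your proposal is correct, but it proves the proposition by a genuinely different route than the paper. The paper argues by contradiction using only the Krein dichotomy: if $2m$ of the colliding eigenvalues left the imaginary axis at parameter values accumulating at $\epsilon=0$, the corresponding submatrix $K'(\epsilon)$ would have vanishing trace there, and continuity would force $\mathrm{Tr}[K'(0)]=0$, contradicting definiteness of $K$. That argument is shorter and needs nothing beyond the dichotomy, but it implicitly leans on continuity of the perturbed eigenvectors and on the stabilization of the ``bad'' index set as $r\to 0$. Your argument is the classical direct one: $J\mathcal{L}$ is $\mathcal{L}$-skew-adjoint, the Riesz projection furnishes a continuously varying $d$-dimensional invariant subspace on which the Hermitian form $\hat K(\epsilon)$ stays definite by continuity, and the congruence $\hat A^\dagger\hat K=-\hat K\hat A$ with $\hat K=\pm B^\dagger B$ makes $B\hat A B^{-1}$ anti-Hermitian, pinning the cluster to $i\mathbb{R}$. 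This buys several things the paper's proof does not: you never need continuity (or even existence) of individual eigenvector branches for $\epsilon\neq 0$, you get semisimplicity of the restricted operator for free, and the radius $r_\lambda$ acquires a concrete meaning (the range over which $\hat K(\epsilon)$ remains definite and $\Gamma$ remains free of other spectrum), at the modest cost of invoking spectral projections rather than a trace identity. Your bookkeeping remarks are the right ones: the hypothesis that algebraic and geometric multiplicities both equal $d$ is exactly what identifies $\mathrm{Ran}\,\Pi(0)$ with the span of the honest eigenvectors, so that $K$ is the Gram matrix of a basis of that subspace, and $\mathrm{Im}[\lambda]>0$ keeps both $0$ and $-\lambda$ outside $\Gamma$.
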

\begin{proof}
Suppose for a contradiction that $K$ is either positive or negative definite, and that in all open balls about $\epsilon=0$, there is a parameter value at which $\lambda$ bifurcates. Let $B(0,r)$ be such an open ball with radius $r>0$. Then there must be some even number $2 \leq 2m \leq d$ and a parameter value $\epsilon_r \in B(0,r)$ such that $2m$ of the eigenvalues $\lambda^{(j)}(\epsilon)$ lie off the imaginary axis when $\epsilon = \epsilon_r$. For sufficiently small $r$, which we henceforth assume, the indices $j$ labelling these ``bad'' eigenvalues will be independent of $r$, by continuity of the spectrum in $\epsilon$. Let us relabel the indices of these bad eigenvalues as $j=1,2,\ldots,2m$. From the corresponding eigenvectors, we define the $2m \times 2m$ matrix
\begin{equation}
K'_{ij}(\epsilon) = \vec{v}^{(i)}(\epsilon)^\dagger \mathcal{L}(\epsilon) \vec{v}^{(j)}(\epsilon), \quad i,j=1,2,\ldots,2m,
\end{equation}
which is a square, Hermitian submatrix of $K$ when $\epsilon=0$. Now by assumption, $\lambda^{(j)} (\epsilon_r) \notin i \mathbb{R}$ for $j=1,2,\ldots 2m$, and therefore by the Krein dichotomy, 
\begin{equation}
\mathrm{Tr}[K'(\epsilon_r)] =0.
\end{equation}
As $r \to 0$, this must hold for arbitrarily small values of $0 < \|\epsilon_r\| <r$, and by continuity of the spectrum in $\epsilon$ it follows that
\begin{equation}
\mathrm{Tr}[K'(0)] = 0.
\end{equation}
But at $\epsilon=0$, $K'$ is a submatrix of $K$ and therefore has non-zero trace by definiteness of $K$, which is a contradiction.
\end{proof}

In this paper, we refer to the eigenvalues of the Krein matrix $K_{ij}$ in Proposition~\ref{prop1} as the \emph{Krein eigenvalues} $\gamma^{(i)}$, $i = 1 \dots d$. The condition of positive or negative definiteness of the Krein matrix is then equivalent to all the Krein eigenvalues having the same sign.

Note that if there are multiple sets of degenerate eigenvalues $\lambda$ at $\epsilon=0$ satisfying the hypotheses of Proposition~\ref{prop1}, each of these has an associated ``radius of linear stability'' $r_{\lambda} > 0$. The radius of linear stability of the discrete breather solution is then $r = \inf_{\lambda} r_{\lambda}$. If $J\mathcal{L}(0)$ has infinitely many distinct eigenvalues, the possibility arises that $r=0$. However, whenever $J\mathcal{L}$ has finitely many distinct eigenvalues satisfying the hypotheses of Proposition \ref{prop1}, then the radius of linear stability $r > 0$. This is the case for all the models studied in this paper.

\section{Review of the DNLSE in 1D}
\label{appendix: dnlse}

To illustrate the concepts outlined above, we start with the simple and well-studied example of the discrete nonlinear Schr{\"o}dinger equation (DNLSE) with nearest-neighbour hopping.

\subsection{Tight-binding matrix and symmetries}
We assume one site per unit cell, so that $a = R$, where $R = 1 \dots L$ are coordinates on a periodic lattice of length $L$ (such that $L +1 \equiv 1$), and  
\begin{equation} \label{eq:simple_tb_def}
h_{R R'} = -\epsilon (\delta_{R+1,R'} + \delta_{R-1,R'}),
\end{equation}
where $\epsilon > 0$ is an energy scale/hopping amplitude. (The internal index $\alpha$ is redundant here.)
In the electronic structure context, this model realizes the simplest example of an \emph{atomic} -- rather than molecular -- band, where the Wannier states of the band are localized directly on top of the lattice sites instead of being shared between them. The resulting evolution equation Eq.~\eqref{eq:nonlinear_tightbinding} is the standard nonintegrable discretization~\cite{kevrekidis2001discrete} of the nonlinear Schr{\"o}dinger equation.

This model does not have crystalline symmetries beyond translations and the breather symmetry data is therefore trivial.

\subsection{Molecular breather limit}
The molecular limit of the DNLSE is obtained by setting $\epsilon = 0$. For this model, the molecular limit is precisely MacKay and Aubry's anti-continuous limit~\cite{MacKay1994,Aubry}, and existence and stability of discrete breathers for some range of $\epsilon \neq 0$ is well established~\cite{MacKay1994,johansson1997existence,Aubry}. It will nevertheless be useful to review the key concepts below.

\subsubsection{On-site breather mode}
A maximally-localized breather solution of the form Eq.~\eqref{eq: mostlocalized_soln} reads
\begin{equation} \label{eq: dnlse_molecularlimitsoln}
\phi^\mathrm{I}_R = \delta_{R,0}, \quad \omega = g.
\end{equation}
Let us first consider existence of discrete breathers for $\epsilon > 0$. This is determined by anharmonicity of the dynamics in the anti-continuous limit~\cite{johansson1997existence}. To show this, we first note that the DNLSE Hamiltonian can be written in action-angle coordinates $\Phi_{R} = I_R^{1/2}e^{-i\theta_{R}}$ as
\begin{equation}
H = \sum_{R=1}^{L} -2\epsilon(I_RI_{R+1})^{1/2}\cos{(\theta_R-\theta_{R+1})} + \frac{g}{2}I_R^2.
\end{equation}
Thus the frequency at $R=0$ in the anti-continuous limit is given by
\begin{equation}
\omega(I_0) = gI_0,
\end{equation}
so that $\partial_{I_0}\omega(I_0) = g$, and the breather is anharmonic by our assumption that $g \neq 0$. Existence of discrete breathers follows for some range of $\epsilon \neq 0$ by the theorem of MacKay and Aubry~\cite{MacKay1994}.

\begin{figure}[t]
\centering
\includegraphics[width=0.48\textwidth,page=1]{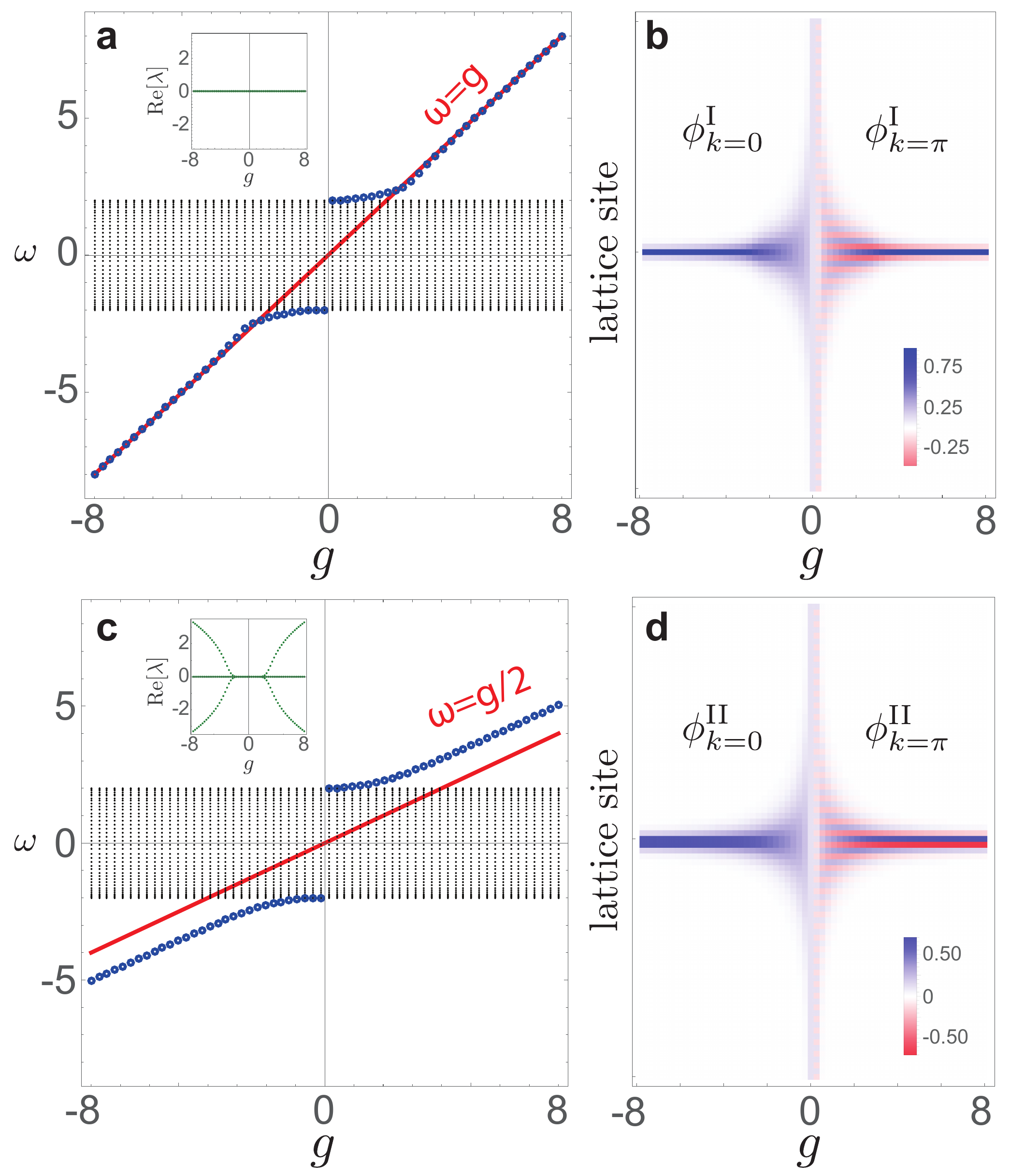}
\caption{Numerical breather solutions in the DNLSE (Eqs.~\eqref{eq:nonlinear_tightbinding}, \eqref{eq:simple_tb_def}) with $\epsilon=1$. \textbf{a}~Linear spectrum (black) and one-site breather energies (blue). The inset shows the real part of the stability eigenvalues $\textrm{Re}[\lambda]$. \textbf{b}~Breather profiles $\phi_R$. Blue (red) represents positive (negative) components. \textbf{c, d}~Same as \textbf{a, b} but for two-site breathers.
}
\label{fig:numerics_metal}
\end{figure}

We next turn to the question of stability. We first note that the breather site contribution to the stability matrix, Eq. \eqref{eq:stabhambreath}, reads
\begin{equation}
    B_0 =  \begin{pmatrix} 0 & 
    -2g
    \\ 
    0
    & 0 \end{pmatrix}.
\end{equation}
This matrix is defective, with a stability eigenvalue $\lambda=0$ that has algebraic multiplicity $2$ and geometric multiplicity $1$. The sites with $R \neq 0$ each contribute the block Eq. \eqref{eq:stabhamnonbreath}, given here by
\begin{equation}
B_1 = \begin{pmatrix} 0 & 
    g
    \\ 
    -g
    & 0 \end{pmatrix},
\end{equation}
with spectrum $\lambda = \pm i g$. Thus the stability spectrum in the anti-continuous limit consists of two $(L-1)$-fold degenerate eigenvalues $\pm ig$ and a single doubly degenerate eigenvalue $\lambda=0$. It follows that the discrete breather solution is (marginally) linearly stable in the anti-continuous limit.

Provided $g \neq 0$, we deduce from the discussion in Sec. \ref{subsec: molecular_stability} that there exists a non-zero neighbourhood of $\epsilon = 0$ in which the stability eigenvalues $\lambda = \pm ig$ cannot bifurcate. Meanwhile, since one stability eigenvalue $\lambda$ is always pinned at $0$ by symmetry, its companion cannot bifurcate for sufficiently small $|\epsilon|$ while preserving the constraint that stability eigenvalues must come in time-reversed pairs $\pm \lambda$. We deduce that linear stability persists for some range of $\epsilon \neq 0$.

\subsubsection{Two-site breather mode}
Next, we construct two off-site breathers localized at the boundary of the unit cell (whose nominal midpoint lies at a fractional lattice coordinate ``$R = 1/2$"). These are the next most localized breather modes after on-site breathers as in Eq.~\eqref{eq: dnlse_molecularlimitsoln}, and it is easily verified that the choice
\begin{equation}\label{eq: dnlse_molecularlimitsoln2}
\phi^\mathrm{II}_R = \frac{1}{\sqrt{2}} \left(\delta_{R0} \pm \delta_{R1}\right), \quad \omega' = \frac{g}{2}
\end{equation}
satisfies Eq. \eqref{eq:time_indep_DNLSE}. 
For two-site breathers, the stability eigenvalues still all lie on the imaginary axis. However, the support of the breather now contributes four (algebraic) zero eigenvalues to the spectrum of $J\mathcal{L}$. Thus there is the possibility of bifurcation and linear instability when $\epsilon \neq 0$
(note that Krein's criterion does not apply to zero eigenvalues~\cite{kapitula2013spectral}).

\subsection{Perturbing away from the molecular limit}
To complete our overview of the simplest case of the DNLSE, we now solve numerically for breathers away from the anti-continuous limit. The results are summarized in Fig.~\ref{fig:numerics_metal}. For $|g|<|\epsilon|$, the breathers are somewhat delocalized as they bifurcate from the linear bands (black points), while for $|g|>|\epsilon|$ they increasingly resemble the analytic expressions of Eqs.~\ref{eq: dnlse_molecularlimitsoln} and \ref{eq: dnlse_molecularlimitsoln2} with increasing $|g|$. The lower (upper) band edge of the linear spectrum corresponds to Bloch states with wave vector $k=0$ ($k=\pi$); correspondingly, breathers with $\omega<0$ ($\omega>0$) have components in phase (out of phase) as they bifurcate from this band edge. While the values of $\omega$ for $\phi^\mathrm{I}_R$ breathers converge to the anti-continuous limit predictions (red line) as $|g|$ is increased, those for $\phi^\mathrm{II}_R$ breathers are shifted by order one numbers $\pm \epsilon$, which reflect the kinetic energy associated with hopping between the occupied sites at $R=0$ and $R=1$.

\section{Nonlinear symmetry constraints} \label{appendix: symmetries}
In this Appendix, we discuss unitary symmetries of discrete breathers in nonlinear-Schr{\"o}dinger-type systems.

\subsection{General unitary transformations}
A unitary transformation $\Phi_a = \sum_{\tilde{a}} U_{a \tilde{a}} \Psi_{\tilde{a}}$, where $U^\dagger U = 1$, maps Eq.~\eqref{eq:nonlinear_tightbinding} to
\begin{equation} \label{syseq:DNLSEnewbasis}
\begin{aligned}
i \frac{\mathrm{d}}{\mathrm{d}t} \Psi_{\tilde{a}} =& \sum_{\tilde{b}} \left(\sum_{a b} h_{a b} U^*_{a \tilde{a}} U_{b \tilde{b}} \right) \Psi_{\tilde{b}} \\&+ g \sum_{\tilde{b} \tilde{c} \tilde{d}} \left(\sum_a U^*_{a \tilde{a}} U^*_{a \tilde{b}} U_{a \tilde{c}} U_{a \tilde{d}} \right) \Psi^*_{\tilde{b}} \Psi_{\tilde{c}} \Psi_{\tilde{d}}.
\end{aligned}
\end{equation}
Here, we use a tilde for all transformed coordinate indices because they may not correspond to the lattice basis. We deduce that any unitary symmetry of the dynamics must satisfy two requirements: (1) it must commute with the hopping matrix: $[U,h]=0$, and (2) it must satisfy
\begin{equation} \label{syseq:NonlinearSymmetryConstraint}
\sum_a U^*_{a \tilde{a}} U^*_{a \tilde{b}} U_{a \tilde{c}} U_{a \tilde{d}} = \delta_{\tilde{a} \tilde{b}} \delta_{\tilde{a} \tilde{c}} \delta_{\tilde{a} \tilde{d}}.
\end{equation}
The second constraint depends on the specific form of the on-site nonlinearity in Eq.~\eqref{eq:nonlinear_tightbinding}, and guarantees that this nonlinearity will be an on-site term in the transformed basis.

\subsection{Point group symmetries}
To describe a point group symmetry operation $S$, we write $a=(\bs{R},\alpha)$ where $\bs{R}$ is the unit cell coordinate and $\alpha$ labels the orbitals that transform under the representation $\rho$ of $S$ with matrix elements $\rho^{S}_{\alpha \alpha'}$. For simplicity, we assume that all orbitals are located at the unit cell center. Since point group symmetries map the lattice basis to itself, we may write $\tilde{a} = (\bs{R}',\alpha')$. The point group operation about the origin $\bs{R}=\bs{0}$ takes the form
\begin{equation}
U^S_{(\bs{R},\alpha),(\bs{R}',\alpha')} = \rho^{S}_{\alpha \alpha'} \delta_{\bs{R},S\bs{R}'},
\end{equation}
where $S\bs{R}'$ is the lattice point that results from acting with $S$ on $\bs{R}'$, correspondingly, we are using the symbol $S$ for both the abstract symmetry and its matrix representation in real space. (In Sec.~\ref{sec: molecularlimit} of the main text, we had abbreviated $U^S \equiv U$.) 
\begin{widetext}
We then find
\begin{equation}
\begin{aligned}
\sum_{\bs{R},\alpha} &U^{S*}_{(\bs{R},\alpha),(\bs{R}',\alpha')} U^{S*}_{(\bs{R},\alpha),(\bs{R}'',\alpha'')} U^{S}_{(\bs{R},\alpha),(\bs{R}''',\alpha''')} U^{S}_{(\bs{R},\alpha),(\bs{R}'''',\alpha'''')} \\&= \sum_{\bs{R},\alpha} \rho^{*S}_{\alpha \alpha'}\delta_{\bs{R},S\bs{R}'} \rho^{*S}_{\alpha \alpha''} \delta_{\bs{R},S\bs{R}''} \rho^{S}_{\alpha \alpha'''} \delta_{\bs{R},S\bs{R}'''} \rho^{S}_{\alpha \alpha''''} \delta_{\bs{R},S\bs{R}''''} 
= \left(\sum_{\alpha} 
\rho^{*S}_{\alpha \alpha'} \rho^{*S}_{\alpha \alpha''} \rho^{S}_{\alpha \alpha'''} \rho^{S}_{\alpha \alpha''''} \right) \delta_{\bs{R}',\bs{R}''} \delta_{\bs{R}',\bs{R}'''} \delta_{\bs{R}',\bs{R}''''},
\end{aligned}
\end{equation}
\end{widetext}
so that Eq.~\eqref{syseq:NonlinearSymmetryConstraint} is satisfied if and only if 
\begin{equation} \label{syseq:sym_condition_on_pgrep}
\sum_{\alpha} 
\rho^{*S}_{\alpha \alpha'} \rho^{*S}_{\alpha \alpha''} \rho^{S}_{\alpha \alpha'''} \rho^{S}_{\alpha \alpha''''} = \delta_{\alpha' \alpha''} \delta_{\alpha' \alpha'''} \delta_{\alpha' \alpha''''}.
\end{equation}
For a diagonal representation of $S$ we have $\rho^{S}_{\alpha \alpha'} = \lambda_{\alpha} \delta_{\alpha \alpha'}$ where the orbitals have eigenvalues $\lambda_{\alpha}$ with $|\lambda_{\alpha}|^2=1$. In this case, Eq.~\eqref{syseq:sym_condition_on_pgrep} is trivially fulfilled. The same is true for all models in the main text where $S$ -- in the guise of e.g. inversion symmetry or $\mathcal{C}_3$ symmetry -- permutes the orbitals within the unit cell, that is,
\begin{equation}
    \rho^S_{\alpha \alpha'} = \delta_{\alpha, (\alpha'+1 \, \mathrm{mod}\, n)},
\end{equation}
where $n$ is the order of the point group symmetry such that $S^n = \mathbb{1}$.

\section{Numerical methods} 
\label{appendix: numerics}
Here, we describe the numerical algorithm that we have employed to find self-consistent breather solutions to Eq.~\eqref{eq:time_indep_DNLSE}. The method consists of iteratively diagonalizing an \emph{effective linear Hamiltonian}, defined herein. Consider a collection of breather fields $\phi^{(i)}$, where $i$ labels the iteration. Let $\phi^{(i=0)}$ be an ansatz for the breather, chosen from the exact breather solutions found in the molecular limits discussed in the main text, as well as the pseudomomentum $k$ of the Bloch state from which the breather bifurcates (if it does). We also set the initial effective linear Hamiltonian to be the linear Hamiltonian, $h^{(i=0)}=h$ (see Eq.~\ref{eq:nonlinear_tightbinding}). We obtain an effective linear Hamiltonian for the next iteration by computing
\begin{align}
    h^{(i+1)}=h^{(i)}+g |\phi^{(i)}|^2
\end{align}
which we then diagonalize,
\begin{align}
    h_{ab}^{(i+1)}=\sum_j\psi^{j}_a E_j \psi^{j*}_b,
\end{align}
where $a$, $b$ are the composite indices $a,b=({\bf R}, \alpha)$ running over unit cells labelled by ${\bf R}$ and their internal degrees of freedom $\alpha$.
Now, consider the set of overlaps between the current breather field $\phi^{(i)}$ and the $j\mathrm{th}$ eigenstate of $h^{(i+1)}$ given by $o_j=|\sum_a \phi^{(i)*}_a \psi^j_a|^2$, for $j=1\ldots N$, where $N$ is the dimension of $h$. We set the new breather field to be the eigenstate of $h^{(i+1)}$ with the largest overlap $o_j$, \emph{i.e.}, we set $\phi^{(i+1)}=\psi^J$, such that $o_J \geq o_j$ for all $j \in 1 \ldots N$.

The iterations can continue until the overlap $o_j$ approaches a value close to 1 or until a maximum number of iterations is reached. We have chosen the second criterion because the first one often results in unpredictable transitions to linearly stable solutions.

\section{Linear stability theory for specific models}
In this Appendix, we summarize the linear stability theory for the specific solvable examples of molecular breathers considered in the main text. 
\subsection{Nonlinear SSH chain}
\label{app:SSHstab}
In the molecular limit of the SSH model, the spectrum of $h_{\alpha\beta}$ is given by $E^{(1)} = -E^{(2)} = m$. Thus the spectrum of the block $B_1$ in Eq. \eqref{eq:stabhamnonbreath} is $\lambda = \pm i(\pm m - \omega)$ and the resonance condition Eq. \eqref{eq: mysterycondition} becomes $\omega = 0$.

Let us first focus on the symmetry-breaking breathers $(\mathrm{I},\pm)$, which turn out to have the same stability eigenvalues. Calculating the spectrum of $B_0^2$ (see Eq. \eqref{eq:stabhambreath}) and the allowed structure of eigenvalues of $B_0$ in the complex plane (see Sec. \ref{sec:linear_stability}) implies that $B_0$ has a pair of eigenvalues $\lambda = \pm i \sqrt{g^2-4m^2}$ and a pair of zero eigenvalues. Since the block $B_0$ is the only possible source of linear instabilities, these breathers are linearly stable on their entire domain of existence $|g| \geq 2m$. The spectrum of the block $B_1$ in Eq. \eqref{eq:stabhamnonbreath} is given by $\lambda = \pm i(\pm m - g)$. Thus the non-zero stability spectrum of the symmetry-breaking breathers consists of six generically distinct imaginary eigenvalues
\begin{equation}
\lambda = \begin{cases} \pm i \sqrt{g^2-4m^2} \\ \pm i (g-m) \\ \pm i (g+m)
\end{cases}.
\end{equation}
For $|g| > 2m$ there are precisely two collisions between eigenvalues on the positive imaginary axis, at $g = \pm 5m/2$, where a $B_0$ eigenvalue intersects a $B_1$ eigenvalue. However, as shown in Fig.~\ref{fig: ssh_theory}b, the corresponding Krein eigenvalues are of the same sign, which prevents linear instability for sufficiently small inter-cell hopping $\epsilon \neq 0$. For all other values of $|g| > 2m$, linear stability is guaranteed for some range of $\epsilon \neq 0$, since the $B_0$ eigenvalues are prevented from bifurcating by their multiplicity or by $U(1)$ symmetry, and the $B_1$ eigenvalues are prevented from bifurcating because the resonance condition Eq. \eqref{eq: mysterycondition} is never satisfied.

We next consider the symmetric breathers $(\mathrm{II},\pm)$. First consider the inversion-even breather $(\mathrm{II},+)$. The spectrum of $B_0$ yields a pair of eigenvalues $\lambda = \pm i \sqrt{2m(2m+g)}$, and a pair of (algebraic) zero eigenvalues. Thus there is a linear instability for $g < - 2m$. Away from level crossings with $B_1$, neither of these pairs can bifurcate for $\epsilon \neq 0$ sufficiently small as above, unless $g = -2m$ where $B_0$ has a quartet of zero eigenvalues. The spectrum of $B_1$ is given by $\lambda = \pm i (\pm m + m - g/2)$. This is non-zero and imaginary, unless $g=0$ or $g=4m$. We deduce that the non-zero stability spectrum of the inversion-even breather consists of six distinct imaginary values
\begin{equation}
\lambda = \begin{cases} \pm i \sqrt{2m(2m+g)} \\
\pm i (g/2-2m) \\
\pm i g/2
\end{cases},
\end{equation}
in its region of stability $g \geq - 2m$, with possible bifurcations from $\lambda=0$ at $g=-2m$, $g=0$ and $g=4m$, and four collisions between $B_0$ and $B_1$ eigenvalues on the positive imaginary axis. As shown in Fig.~\ref{fig: ssh_theory}d, these all have Krein eigenvalues of the same sign. In addition, the resonance condition Eq. \eqref{eq: mysterycondition} is satisfied at $g = 2m$, allowing bifurcations to occur for $\epsilon \neq 0$. For all other values of $g \geq -2m$, the inversion-even breather is guaranteed to be stable in some neighbourhood of $\epsilon=0$.

Next consider the inversion-odd breather $(\mathrm{II},-)$. The analysis proceeds much as for the inversion-even breather, and we find the stability spectrum
\begin{equation}
\lambda = \begin{cases} \pm i \sqrt{2m(2m-g)} \\
\pm i (g/2+2m) \\
\pm i g/2
\end{cases},
\end{equation}
which is related to the inversion-even stability spectrum by a reflection $g \mapsto -g$. The resonance condition Eq. \eqref{eq: mysterycondition} is now satisfied at $g = -2m$, and there are possible bifurcations from $\lambda = 0$ at $g=-4m$, $g=0$ and $g=2m$. We deduce that the inversion-odd breather is linearly stable for $g \leq 2m$, and for all $g$ in this range outside a measure zero set of values, linearly stable for some range of $\epsilon \neq 0$.

\subsection{Nonlinear breathing Kagome lattice}
\label{app:BKLstab}
The spectrum of $h_{\alpha\beta}$ for the molecular limit of this 2D breathing kagome lattice is given by $E^{(1)}=m, \, E^{(2)}=-2m$, with the former eigenvalue having algebraic multiplicity two. Thus the spectrum of the block $B_1$ in Eq. \eqref{eq:stabhamnonbreath} is given by $\lambda \in \{\pm i (m-\omega), \pm i(-2m-\omega)\}$, and the resonance condition Eq. \eqref{eq: mysterycondition} is $\omega = -m/2$.

Since the breather of Type II is complicated to write down analytically, we perform its stability analysis numerically.

For breathers of Type III, the spectrum of the block $B_1$ is given by
$\lambda \in \{\pm ig/2, \pm i(3m+g/2)\}$, while the spectrum of block $B_0$ is given by $\lambda \in \{ \pm i \sqrt{(\gamma \pm \sqrt{\gamma^2+2mg^3})/2}\}$ where $\gamma = 9m^2-mg+g^2/4$. Thus the stability spectrum is given by
\begin{equation}
\lambda = \begin{cases}\pm i \sqrt{(\gamma \pm \sqrt{\gamma^2+2mg^3})/2} \\
\pm i g/2\\
\pm i (3m+g/2).
\end{cases}
\end{equation}
We find numerically that the left boundary of the region of stability where $\gamma^2 + 2mg^3 = 0$ occurs at $g \approx -9.07$, while the right boundary is easily seen to lie at $g = 0$. Thus the region of linear stability for Type III breathers is given by $-9.07m \lesssim g \leq 0$. In the interior of this region the spectrum of $B_0$ is at most two-fold degenerate, while the only possible bifurcation due to the spectrum of $B_1$ crossing zero occurs when $g = -6m$. Around $g \approx -3.5m$, there is also a single value of $g$ at which the $B_1$ eigenvalue $\pm i (3m+g/2)$ crosses the non-zero $B_0$ eigenvalue. As shown in Fig.~\ref{fig: oai_theory}d, the respective Krein eigenvalues are of opposite sign, so that this crossing may turn into an instability away from the molecular limit. Finally, the resonance condition Eq. \eqref{eq: mysterycondition} is satisfied at $g = -3m$, again allowing for linear instability when $\epsilon \neq 0$. Away from these three points, the Type III breather is expected to be stable for some range of $\epsilon \neq 0$.

For breathers of Type IV, the block $B_0$ has non-zero eigenvalues  $\lambda = \pm i \sqrt{9m^2 + 2mg}$, while the spectrum of the block $B_1$ is given by
$\lambda \in \{\pm i g/3, \pm i (3m-g/3)\}$. Thus the non-zero stability spectrum reads
\begin{equation}
\lambda = \begin{cases} \pm i \sqrt{m(9m+2g)} \\
\pm i g/3\\
\pm i (3m-g/3)
\end{cases},
\end{equation}
implying that Type IV breathers are linearly stable for $g \geq -9m/2$. The resonance condition Eq. \eqref{eq: mysterycondition} is satisfied at $g = 9m/2$ allowing for possible bifurcations for $\epsilon \neq 0$.

The dispersion of these solutions with $g$, their linear stability eigenvalues (Sec.~\ref{sec:linear_stability}), and their Krein eigenvalues (Sec.~\ref{sec: kreinsignature}) are summarized in Fig.~\ref{fig: oai_theory}. We find that the solutions of Type I are only stable in the range $g\lesssim -4m$ and $g>0$ where the frequency of these solutions lies outside the band gap. In these regimes, the Krein eigenvalues have the same sign for any given $g$ implying stability away from the molecular limit. Conversely, solution II is stable for $g\gtrsim -4.5m$, both outside and between the linear bands, but has stability bands with opposite Krein eigenvalue that cross at $g=4.5m$ where condition Eq.~\ref{eq: mysterycondition} is fulfilled. Correspondingly, for $g \approx 4.5m$ and small $\epsilon \neq 0$, this solution may develop an instability away from the molecular limit, which we investigate numerically in Sec.~\ref{sec: c3numerics}. Finally, the solutions of Type III are only stable within the band gap and in the range $-9.07m \lesssim g \leq 0$; for $g \approx -3m$ and small $\epsilon \neq 0$ they may become unstable away from the molecular limit due to a crossing between stability bands of opposite Krein eigenvalue (where again condition Eq.~\ref{eq: mysterycondition} is fulfilled).

\subsection{Testing the stability of breathers in the nonlinear breathing Kagome lattice}
\label{app:StabTest2D}
In this section, we numerically test the stability of the breather solutions in the nonlinear Kagome lattice with real-space hopping matrix~\eqref{eq: c3model_realspacehamiltonian}. Figure~\ref{fig:numerics_Kagome_Dynamics} shows the time evolution of all the breather solutions over $10^4$ periods of oscillation $T$ of each breather. In (a), $g=-12$. At this value of $g$, the linear stability analysis in the main text predicts a stable breather $\phi^{(\mathrm{II})}$ and unstable breathers $\phi^{(\mathrm{III})}$ and $\phi^{(\mathrm{IV})}$ (see Fig.~\ref{fig: oai_theory}a, for summary of results). Although these predictions are linear in perturbation theory and only expected to hold in a sufficiently small neighborhood of the molecular limit, the time evolution of Fig.~\ref{fig:numerics_Kagome_Dynamics}(a) agrees with these predictions, as only the first plot shows a constant breather during time evolution. In (b), $g=-8$. At this value of $g$, the linear stability analysis in the main text predicts the stable breathers $\phi^{(\mathrm{II})}$ and $\phi^{(\mathrm{III})}$, and an unstable breather $\phi^{(\mathrm{IV})}$. The time evolution in Fig.~\ref{fig:numerics_Kagome_Dynamics}(b) agrees with these predictions. These results suggest that higher orders in perturbation theory do not appreciably change the linear stability of the breathers and that the predictions derived in the main text are robust to both nonlinear instabilities and to deviations from the molecular limit.
\begin{figure*}[t]
\centering
\includegraphics[width=\textwidth,page=1]{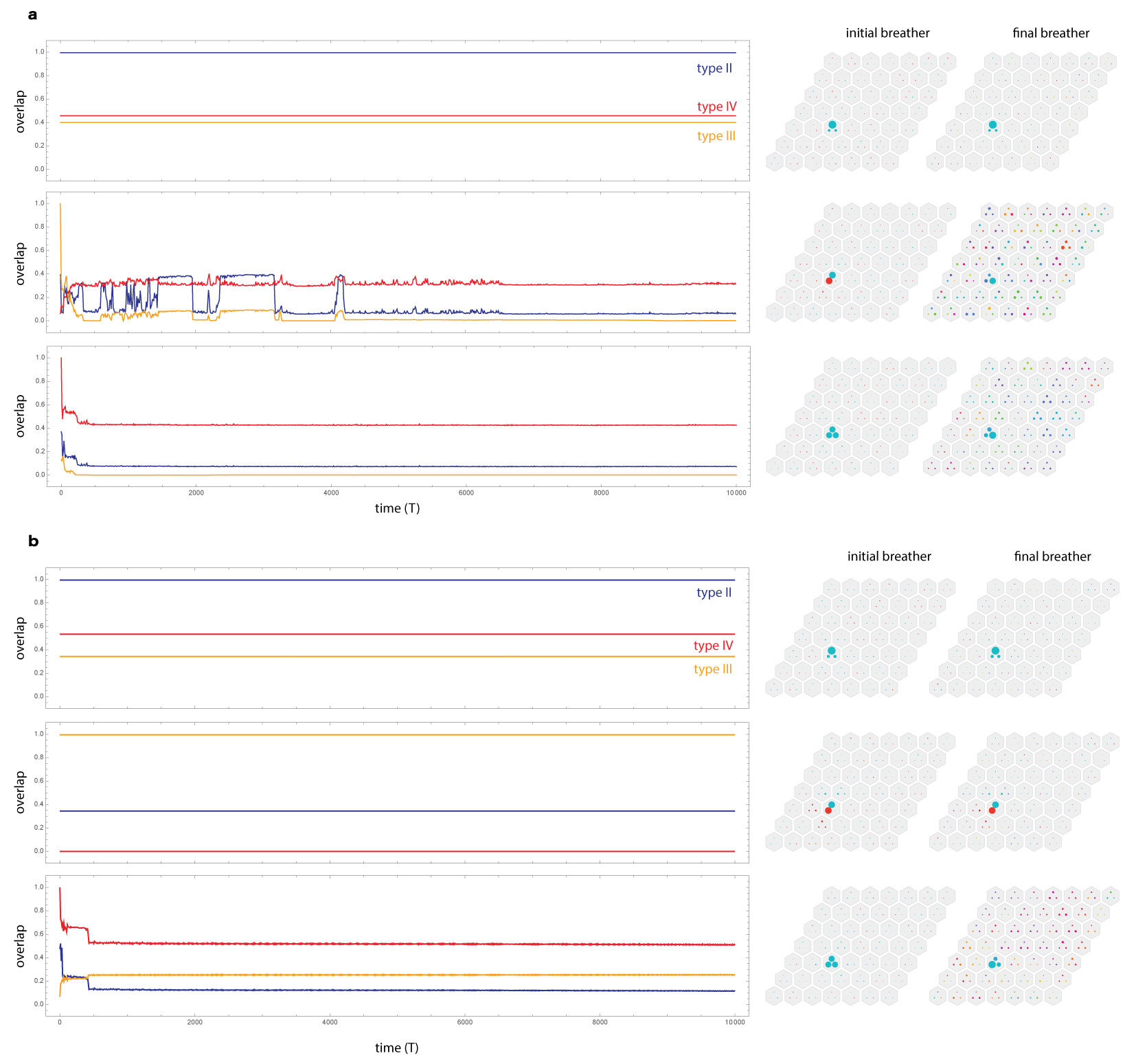}
\caption{Time-evolution of breathers in the nonlinear Kagome lattice with real-space hopping matrix~\eqref{eq: c3model_realspacehamiltonian} under small perturbations. The parameters of the lattice are $m=1$ and $\epsilon=0.1$. The strength of nonlinearity is set to $g=-12$ in (a) and $g=-8$ in (b). At these values of $g$, there are three breather solutions: $\phi^{(\mathrm{II})}$, $\phi^{(\mathrm{III})}$, and $\phi^{(\mathrm{IV})}$. The initial fields are $\phi^{(\mathrm{i})}(t=0)=\phi^{(\mathrm{i})}+\phi_\mathrm{pert}$, for $\mathrm{i=II}$, $\mathrm{III}$, and $\mathrm{IV}$, where $\phi_\mathrm{pert}$ is a random field drawn from a uniform distribution between $\pm 0.01$. The fields $\phi^{(\mathrm{i})}(t=0)$ are normalized to have unit magnitude, which is preserved throughout the time evolution. Each plot corresponds to one time evolution, and the blue, red, and yellow lines indicate the overlaps (inner product) of the instantaneous field with the unperturbed breathers $\phi^{(\mathrm{i})}$. The simulations were performed using the RK4 algorithm with a time step $\Delta t = 0.005$ and run for 10,000 breather periods.
}
\label{fig:numerics_Kagome_Dynamics}
\end{figure*}

\bibliography{refs}
\end{document}